\newcommand{\IR}{\mathbb{R}}
\newcommand{\ID}{\mathbb{D}}
\newcommand{\IC}{\mathbb{C}}
\newcommand{\dlab}{\langle\!\langle}
\newcommand{\drab}{\rangle\!\rangle}
\newcommand{\nM}{\nabla_{\!\!_M}}
\newcommand{\diam}{\text{diam}}
\DeclareMathOperator{\dist}{dist}
\DeclareMathOperator{\supp}{supp}
\renewcommand{\ln}{\text{\rm ln}}
\newcommand{\beq}{\begin{equation}}
\newcommand{\enq}{\end{equation}}
\let\geq\geqslant
\let\leq\leqslant
\def\theequation{\@arabic\c@equation}
\numberwithin{equation}{section}
\newtheorem{theorem}{Theorem}[section]
\newtheorem{lemma}[theorem]{Lemma}
\newtheorem{definition}[theorem]{Definition}
\newtheorem{example}[theorem]{Example}
\theoremstyle{remark}
\newtheorem{remark}[theorem]{Remark}
\begin{document}

\title[Drift-diffusion equations on  domains in $\IR^n$]{Drift-diffusion equations on domains in $\IR^d$: essential self-adjointness and stochastic completeness}

\author[G.\ Nenciu]{Gheorghe Nenciu}
\address{Gheorghe Nenciu\\Institute of Mathematics ``Simion Stoilow'' of the Romanian Academy\\ 21, Calea Grivi\c tei\\010702-Bucharest, Sector 1\\Romania}
\email{Gheorghe.Nenciu@imar.ro}

\author[I.\ Nenciu]{Irina Nenciu}
\address{Irina Nenciu\\
         Department of Mathematics, Statistics and Computer Science\\ 
         University of Illinois at Chicago\\
         851 S. Morgan Street\\
         Chicago, IL \textit{and} Institute of Mathematics ``Simion Stoilow''
     of the Romanian Academy\\ 21, Calea Grivi\c tei\\010702-Bucharest, Sector 1\\Romania}
\email{nenciu@uic.edu}

\thanks{The authors are indebted to Peter Constantin for originally suggesting that they study stochastic confinement
using the methods they had developed for quantum confinement. The research of I.N. is partly supported by NSF
grant DMS-1150427. The paper is dedicated to the 150th anniversary of the Romanian Academy.}

\begin{abstract}
We consider the problem of quantum and stochastic confinement for drift-diffusion equations on domains  $ \Omega \subset \mathbb R^d$.
We obtain various sufficient conditions on the behavior of the coefficients near the boundary of $\Omega$ which ensure the essential self-adjointness or 
 stochastic completeness  of the symmetric form of the drift-diffusion operator,  $-\frac{1}{\rho_\infty}\,\nabla\cdot \rho_\infty\mathbb D\nabla$. The proofs are based on the method developed in \cite{NN1} for  quantum confinement on bounded domains in $\mathbb R^d$. 
 In particular for stochastic confinement we combine the Liouville property with Agmon type exponential 
 estimates for weak solutions.  
\end{abstract}

\maketitle

\tableofcontents

\section{Introduction}\label{S:1}

Consider, on a  connected domain $\Omega$ in $\IR^d$, $d\geq 1$, second order partial differential operators of the form
\begin{equation*}\label{E:defn_H_V}
H=-\frac{1}{\rho_\infty}\,\nabla\cdot \rho_\infty\mathbb D\nabla +V
\end{equation*}
with $\rho_\infty$ a positive function. The problem we address in this paper is the existence, uniqueness and conservativeness of dynamics 
associated to $H$
in either a stochastic or a quantum mechanical context.
In the latter, where usually $\rho_\infty(x)=1$, $H$ describes the Hamiltonian of 
a particle with anisotropic, variable mass subject to a potential $V$. 
The corresponding dynamics is governed by Schr\"odinger's equation
$$
i \partial_t u = Hu, \quad u(\cdot,0)=u_0\, .
$$
At the heuristic level, the uniqueness of the dynamics is equivalent to the fact that, due 
to the behavior of the coefficients in $H$ near $\partial\Omega$, the particle is confined to 
$\Omega$ and never visits the boundary. Hence no boundary conditions are needed, and the coefficients determine 
completely the dynamics.
At the mathematical level, this is made rigorous by Stone's Theorem which states that there exists a unique solution 
$$
u(x,t) = e^{-iHt }u_0(x), \quad \| u(\cdot,t)\|_{L^2}= \| u_0 \|_{L^2}, \quad x\in \Omega,
$$
if and only if $H$ is essentially self-adjoint on $C_0^\infty(\Omega)$. If $\rho_\infty \not =1$, the conservativeness is in a weighted $L^2$-space, see Theorems \ref{T:ess_sa_1} and \ref{T:4} below.

In the case of stochastic particles, $H$ generates the Fokker-Planck dynamics of the system, after a suitable symmetrization  (see Section 2 for details). In this context,
$ \mathbb D(x)>0$ is the diffusion matrix,
$\rho_\infty(x)=e^{-F(x)}$ where $F(x)$ is the drift potential, and $V(x)\geq 0$ the sink potential (see for example \cite{AMTU}).
We shall only discuss the case when $V=0$, i.e., there is no absorption of the 
particle inside $\Omega$. 
To this end, we consider
\begin{equation*}
\label{E:dd_sym}
\partial_t\mu(x,t)=-\big(H_0\mu\big) (x,t)\,,\qquad \mu(\cdot,0)=\mu_0,
\end{equation*}
where
\begin{equation*}
H_0=-\frac{1}{\rho_\infty}\,\nabla\cdot \rho_\infty\mathbb D\nabla.
\end{equation*}
The question  of existence and uniqueness  of a solution to
this equation such that
\begin{equation*}
\label{E:pos_sym}
\mu(x,t)\geq 0\quad\text{for all}\,\,\, x\in\Omega\,\,\,\text{and}\,\,\, t\geq 0
\end{equation*}
and
\begin{equation*}
\int_{\Omega}\mu(x,t)\rho_\infty(x)dx=\int_{\Omega}\mu(0,x)\rho_\infty(x)dx
\quad\text{for all}\,\,\, t\geq 0
\end{equation*}
is known as the stochastic completeness problem for $H_0$.
As is well known (see, e.g., \cite{P} and references therein), one can look at this problem in the framework of 
the powerful theory of self-adjoint extensions
of the positive symmetric operator $H_0$ defined on $C_0^\infty(\Omega)$. More precisely, a self-adjoint extension $A$ of $H_0$
is called  Markovian  if its associated semigroup $P_t^A=e^{-tA}$ is a Markov process,
i.e. if it has, for $t>0$, a smooth, positive kernel $P^A(x,y;t)$ satisfying
\begin{equation*}
\label{E:defn_MarkovianExtension1}
\int_\Omega P^A(x,y;t) \rho_\infty(y)\,dy\leq 1\quad\text{for all}\,\,\, x\in\Omega\,\,\,\text{and}\,\,\, t>0\,.
\end{equation*}
If equality is attained for all $x\in\Omega$ and $t>0$,
then $A$ is called conservative. The existence  of Markovian extensions of $H_0$ is assured \cite{GM} by the fact
 that the Friedrichs
extension $H_0^F$ of $H_0$ (which for the case at hand is just the Dirichlet extension) is Markovian, and moreover
 is minimal among all Markovian extensions of $H_0$. It follows that if $H_0^F$ is conservative then it is the only Markovian
extension of $H_0$.

Thus, stochastic completeness (or, in heuristic terms, stochastic confinement) is equivalent with the fact that
$H_0^F$ is conservative, while quantum confinement is equivalent with the fact that $H_0^F$ is the only self-adjoint extension 
of
$H_0$ (i.e. $H_0$ is essentially self-adjoint). Essential self-adjointness and stochastic completeness for $H_0$ are two related but 
different problems in the theory of self-adjoint extensions of symmetric partial differential operators: it might happen that 
$H_0^F$ is the only self-adjoint extension of $H_0$ but it is not conservative  or that $H_0$ has many self-adjoint extensions
 but $H_0^F$ is conservative.

Essential self-adjointness and stochastic completeness problems for second order partial differential operators on domains in $\IR^d$
(and more generally on Riemannian manifolds) have a long and ramified history (see e.g. \cite{BMS, Br, Da, G1, GM, KSWW, RS} and references given there). 
In the one-dimensional case the situation is well understood due to the Weyl
point/circle limit theory for essential self-adjointness \cite{RS} and Feller theory for stochastic completeness \cite{Fe} (see also \cite{RoSi2}).
In the multi-dimensional case, the 
self-adjointness problem has been much 
studied, a wealth of results are available, and we send the reader to \cite{ BMS, Br, KSWW, RS} for references and a detailed account.
We only 
mention here the line of research 
 initiated by the 
celebrated paper of Wienholtz \cite{Wie} (see also \cite{J, Sim, S-H, W}), the Cordes approach \cite{Co1, Co2} and the
Brusentsev's theory \cite{Br}. 

As for the stochastic completeness problem, while many abstract criteria exist
(among them the famous Khasminskii test, and the Liouville property we shall use in this paper), 
not so many results giving explicit 
conditions on the coefficients in \eqref{E:defn_H_0} 
ensuring uniqueness or stochastic completeness  on bounded domains are known \cite{Ch}. 
A detailed study of the particular 
case $\rho_\infty(x)=1$ was only recently made \cite{RoSi1} in the framework of Dirichlet forms and capacity theory \cite{Eb, FOT, MR}. 

For the case at hand it turns out 
that there is also a beautiful 
geometric setting involved. Let $ M =(\Omega, \mathbb D^{-1})$ be the Riemannian manifold obtained by endowing 
$\Omega$ with the metric given by
\begin{equation*}
 ds^2=\sum_{j,k=1}^d\mathbb D(x)^{-1}_{j,k}dx_jdx_k.
\end{equation*}
The triple $ N=\big(\Omega, \mathbb D^{-1},\rho_\infty(x)\,dx\big)$, i.e. $ M$ equipped with the measure $\rho_\infty\, dx$,
is called  a weighted Riemannian manifold, and in this setting $H_0$ is (up to a sign) the weighted Laplace-Beltrami
operator. Now, if $ M$ is a complete Riemannian manifold, then a full answer is 
available due to abstract results on self-adjointness and stochastic completeness for Laplace-Beltrami operators on weighted
Riemannian manifolds \cite{BMS, G2, G1, GM}: $H_0$ is essentially self-adjoint and $H_0^F$ is conservative provided $\rho_\infty(x)$ 
does not increase too fast as $x \rightarrow \partial\Omega$.

The situation when  $ M$ is not complete is much more involved and the problem is far from being completely understood. 
In particular, additional conditions on $M$ seem to be necessary in order to ensure that a 
sufficiently rich family of cut-off functions with compact support exists on $M$. 
 
In a recent note \cite{NN1}, for a simple case when $M$ is not complete (more precisely for a Schr\"odinger operator  with 
repulsive potential on bounded domains in $\IR^d$ with smooth boundary),
we refined previously known results on the behavior
of the potential as $x \rightarrow \partial\Omega$ which ensure essential self-adjointness. We achieve this by combining the fundamental criterion for 
(essential) 
self-adjointness  \cite{RS} with ideas coming from Agmon's theory of exponential decay of weak solutions of partial differential equations
\cite{Ag}. 
These ideas have been expanded and used  to prove essential self-adjointness of Schr\"odinger type operators 
on graphs and non-complete Riemannian manifolds  \cite{ CT, CHT, MT, PRS}.

The main point of the present paper is that the method in  \cite{NN1} also works for the stochastic completeness of $H_0$. More 
precisely, by using the Liouville
property as criterion for stochastic completeness \cite{G1,MV} we are led to the same kind of estimates for weak solutions  as in the case of 
essential
self-adjointness.
 
We prove two types of results. In the first type, the conditions on the coefficients, while very general and elegant, are somewhat 
implicit and expressed in terms of the geometry  
and the volume element of 
$ N=\big(\Omega, \mathbb D^{-1},\rho_\infty(x)\,dx \big)$. Further work is necessary to apply them to concrete cases.
 The results in Theorem 4.1.i and Theorem 4.2.i (i.e. for $M$ is complete) are particular cases of general results 
quoted above: we include them to 
give independent and more elementary proofs. Theorem 4.1.ii is 
the generalization to our setting of Theorem 4 in \cite{NN1}.
Related results in a more general Riemannian manifold
setting have been independently obtained very recently in \cite{MT, PRS}. 
Theorem 4.2.ii is the analog for stochastic completeness of Theorem 4.1.ii.

The second type of results give explicit conditions on the coefficients in terms of the Euclidean geometry of $\Omega$
 ensuring stochastic completeness and essential 
 self-adjointness of $H_0$, respectively. The first result (Theorem \ref{T:3}) gives, for a large class of bounded domains, 
sufficient conditions on the behavior of the pair 
$\mathbb D(x),\rho_\infty (x) $ as $x \rightarrow \partial\Omega$ ensuring stochastic completeness of $H_0$. The results 
substantiate and make 
precise the heuristics that both weak diffusion ($\mathbb D(x) \rightarrow 0)$ and strong repulsive drift potential ($\rho_\infty (x)
\rightarrow 0$) enhance confinement. The interesting fact is that even when $\mathbb D(x)$ or $\rho_\infty (x)$ blows up when 
$x \rightarrow \partial\Omega$, a strong enough decay of $\rho_\infty (x) $ or $\mathbb D(x)$, respectively, still ensures the confinement. 
In this respect, Theorem \ref{T:3} is the generalization of some results
in \cite{RoSi1} where only the case $\rho_\infty \equiv 1$ was considered. The second result (Theorem \ref{T:4}) gives sufficient conditions for  
the essential 
self-adjointness of $H_0$  even in cases when $M$ is not complete. Here the main point is that even in the absence of a repulsive, confining
potential there are cases when the ``Hardy barrier'' given by the Hardy inequality is strong enough to ensure confinement. For a concrete example this fact has been proved in
\cite{Br}; see Corollary 4.1 and the remark following it. 
For similar results in Riemannian manifolds  setting see   \cite{BP, PRS}.
In Theorems \ref{T:3} and \ref{T:4}, we assume that
$\Omega$ is bounded, but an easy generalization to unbounded domains with bounded boundary is pointed out in Theorem \ref{T:5}.

The plan of the paper is as follows. In Section 2 we give some background facts about drift-diffusion equations. Since some of the 
estimates are 
somewhat technical, in Section 3 we set some notations and explain the main ideas behind the proofs in the next sections. Sections 4  and 5 contain the main results. 
Finally, in Section 6 we give some 
corollaries and examples illustrating the heuristics of quantum/stochastic confinement.

\section{Background on drift-diffusion equations}\label{S:2}

As explained above, we consider a   connected open set $\Omega\in\IR^d$, $d\geq 1$, to which we 
want to confine the stochastic and/or quantum particle. On this set, we consider a diffusion matrix
$\ID(x)=\big(\ID_{jk}(x)\big)_{1\leq j,k\leq d}$ and a drift potential $F(x)$, such that $\ID,F\in \mathcal C^\infty(\Omega)$, 
\begin{equation*}
\ID(x)=\overline{\ID(x)}=\ID^T(x)>0\quad\text{and}\quad F(x)=\overline{F(x)}\quad\text{for all}\,\,\, x\in\Omega\,.
\end{equation*}
Note that this implies that, for any compact $K\subset\Omega$, there exist constants $c_K, C_K>0$ such that:
\begin{equation*}
c_K\leq \ID(x)\leq C_K, \quad | F(x)| \leq C_K\quad\text{for all}\,\,\, x\in K\,.
\end{equation*}
We denote the (heat or particle) density function by $\rho(x,t)$, and we 
require that 
\begin{equation*}
\rho(x,t)\geq 0\quad\text{and}\quad \rho(\cdot,t)\in L^1(\Omega)
\end{equation*}
for all $x\in\Omega$ and $t\geq 0$.
The drift-diffusion equation can then be written as:
\begin{equation}\label{E:dd_original}
\partial_t\rho(x,t)=\big(\mathcal L\rho\big)(x,t)\,,\qquad \rho(\cdot,0)=\rho_0
\geq 0\,,
\end{equation}
where 
\begin{equation*}
\mathcal L\rho=\sum_{j,k=1}^n \partial_j\big(\mathbb D_{jk}(\partial_k\rho+\rho\partial_kF)\big)
=\nabla\cdot\mathbb D(\nabla\rho+\rho\nabla F)\,.
\end{equation*}

 The problem under study is the existence and uniqueness of a solution of \eqref{E:dd_original} confined in $\Omega$ 
which amounts for 
the condition
\begin{equation}
\big\|\rho(\cdot,t)\big\|_{L^1(\Omega)}=\big\|\rho_0\big\|_{L^1(\Omega)}
\quad\text{for all}\,\,\, t\geq 0\,.
\end{equation}

As we shall explain below, from the mathematical point of view it is convenient to work with the symmetrised form of $\mathcal L$.
 More precisely, define
\begin{equation}\label{E:ro_infty_defn}
\rho_\infty(x)= e^{-F(x)}\quad\text{for all}\,\,\, x\in\Omega\,,
\end{equation}
and note that, by a direct computation, $\mathcal L\rho_\infty\equiv0$.
If we set
\begin{equation*}
\mu(x,t)=\frac{\rho(x,t)}{\rho_\infty(x)}\quad\text{for all}\,\,\, x\in\Omega\,\,\,\text{and}\,\,\, t\geq 0\,,
\end{equation*}
then
\begin{equation}\label{E:dd_sym2}
\partial_t\mu(x,t)=-\big(H_0\mu\big) (x,t)\,,\qquad \mu(\cdot,0)=\mu_0\big(=\rho_0/\rho_\infty\big)
\end{equation}
where
\begin{equation}\label{E:defn_H_0}
H_0=-\frac{1}{\rho_\infty}\,\nabla\cdot \rho_\infty\mathbb D\nabla\,.
\end{equation}
To make this correlation more rigorous, define the symmetrization operator
\begin{equation*}
S\,:\, L^1_{\rho_\infty}(\Omega)\,\rightarrow\, L^1(\Omega)\,,\qquad \big(Sf\big)(x)=f(x)\rho_\infty(x)\,.
\end{equation*}
Note that $S$ is positivity-preserving: $\big(Sf\big)(x)\geq 0$ iff $f(x)\geq0$, that 
it preserves the $L^1$ norm, in the sense that:
\begin{equation*}
\big\|Sf\big\|_{L^1(\Omega)}=\big\|f\big\|_{L^1_{\rho_\infty}(\Omega)}\,,
\end{equation*}
and that it transforms the operator $\mathcal L$ into its symmetrised form: 
$$
H_0=-S^{-1}\mathcal L S=-\frac{1}{\rho_\infty}\,\nabla\cdot \rho_\infty\mathbb D\nabla
$$ 
which is symmetric on 
$$
\mathcal D(H_0)=C_0^\infty(\Omega)\subset L^2_{\rho_\infty}(\Omega)\,.
$$
Here $L^1_{\rho_\infty}(\Omega)$ and $L^2_{\rho_\infty}(\Omega)$ denote the weighted spaces as defined
in Section~\ref{S:3}.

The question of stochastic confinement then becomes the question of the existence and uniqueness of
a solution $\mu$ to \eqref{E:dd_sym2} such that 
\begin{equation}
\label{E:pos_sym2}
\mu(x,t)\geq 0\quad\text{for all}\,\,\, x\in\Omega\,\,\,\text{and}\,\,\, t\geq 0
\end{equation}
and
\begin{equation}\label{E:norm_sym}
\big\|\mu(\cdot,t)\big\|_{L^1_{\rho_\infty}(\Omega)}=\big\|\mu_0\big\|_{L^1_{\rho_\infty}(\Omega)}
\quad\text{for all}\,\,\, t\geq 0\,.
\end{equation}
The reason for using the form \eqref{E:dd_sym2},\eqref{E:defn_H_0} of the drift-diffusion
equation is precisely the fact that $H_0$ is a symmetric (unbounded) operator on $\mathcal C_0^\infty(\Omega)
\subset L^2_{\rho_\infty}(\Omega)$, and hence one can use the powerful theory of self-adjoint
extensions to investigate the question of existence and uniqueness of solutions to \eqref{E:dd_sym2} with properties
\eqref{E:pos_sym2} and \eqref{E:norm_sym}. 
The operator $H_0$ becomes a familiar object in the following geometric setting encoding its coefficients.
Let $ M :=(\Omega, \mathbb D^{-1})$ be the Riemannian manifold obtained by endowing 
$\Omega$ with the metric given by
\begin{equation}\label{E:R-metric} 
 ds^2=\sum_{j,k=1}^d\mathbb D(x)^{-1}_{j,k}dx_jdx_k.
\end{equation}
The triple $ N:=\big(\Omega, \mathbb D^{-1},\rho_\infty(x)\,dx\big)$ 
is $M$ equiped with the measure $\rho_\infty\, dx$ and 
is called  a weighted Riemannian manifold. Note that one can also view this measure with
respect to the Riemannian volume element, $v_M$, on $M$, since 
$$
\rho_\infty(x)\,dx=\rho_\infty(x)\big(\text{det}\, \mathbb D(x)\big)^{-1/2}\,dv_M(x)\,.
$$
 In this setting $H_0$ is (up to a sign) the weighted Laplace-Beltrami
operator \cite{GM}.

Note that, since $H_0\geq 0$, we know from the general theory that it has self-adjoint extensions. 
For further use, we recall  (see e.g. \cite{ GM, P}) several  facts
about the self-adjoint extensions of the closure of $H_0$
(which, by a slight but standard abuse of notation, is also denoted by $H_0$). 

\begin{definition} A self-adjoint
extension $A$ of $H_0$ is called Markovian if its associated semigroup $P_t^A=e^{-tA}$ is a Markov process,
i.e. if it has, for $t>0$, a smooth, positive kernel $P^A(x,y;t)$ satisfying
\begin{equation}\label{E:defn_MarkovianExtension}
\int_\Omega P^A(x,y;t) \rho_\infty(y)\,dy\leq 1\quad\text{for all}\,\,\, x\in\Omega\,\,\,\text{and}\,\,\, t>0\,.
\end{equation}
If equality is attained for all $x\in\Omega$ and $t>0$,
then $A$ is called conservative.
\end{definition}

Let $H_0^F$ be the Friedrichs extension of $H_0$, as defined abstractly for arbitrary positive symmetric
operators (see, e.g., \cite{Sch}, \cite{F}). For the case at hand, this is nothing more than
the Dirichlet extension of $H_0$, \cite{Sch}, \cite{GM}. It is important to note for what follows (see \cite{GM} and the 
references therein) that $H_0^F$ is a Markovian extension of $H_0$, and is minimal among all Markovian extensions 
of $H_0$, in the sense that for any Markovian extension $A$ different from $H_0^F$ of $H_0$ 
\begin{equation}
P^{H_0^F}(x,y;t)\leq P^A(x,y;t)\,,
\end{equation}
and there exist points $x_0,y_0\in\Omega$ and $t_0>0$
such that $P^{H_0^F}(x_0,y_0;t_0)< P^A(x_0,y_0;t)$.

\begin{definition}
The weighted manifold $N=\big(\Omega,\ID^{-1},\rho_\infty(x)\,dx \big)$ (or, equivalently, the operator $H_0$) is called stochastically complete 
if its Friedrichs extension $H_0^F$ is conservative.
\end{definition}

\begin{remark}
Note that it follows from the definition above that a necessary condition for the stochastic completeness
of $N$ is that $H_0$ has a unique Markovian extension, namely $H^F_0$.  Of course, by definition $H_0$ is 
essentially self-adjoint if  $H_0^F$ is its  only self-adjoint extension.
\end{remark}

\section{The approach}\label{S:3}

In this section we give the basic ideas underlying the proofs of the results in the next two sections. While most of the notation used in this paper is standard, 
we collect some of it here for ease of reference. Throughout the paper, $\Omega$ denotes a  connected
open set in $\IR^d$, $d\geq 1$. The set  $K\subset \Omega$ is said to be compact if it is compact  in $\mathbb R^d$.
Note that, since $\Omega$ is open, $K$ is compact in the subspace topology on $\Omega$ inherited from $\IR^d$ if and only if
it is compact in $\IR^d$.
For $p\geq 1$ and for a fixed weight function $w\in C(\Omega)$, 
$w(x)>0$ for every $x\in\Omega$, we consider the weighted space:
\begin{equation*}
L^p_w(\Omega)=\bigg\{f\,:\,\Omega\to\IC\,\bigg|\, f\text{ measurable and } \int_\Omega |f(x)|^pw(x)\,dx<\infty\bigg\}
\end{equation*}
with the usual norm
\begin{equation*}
\|f\|_{L^p_w(\Omega)}=\bigg(\int_\Omega |f(x)|^pw(x)\,dx\bigg)^{1/p}\,.
\end{equation*}
No lower index will be used when $w \equiv 1$.
In the case $p=2$ and weight function $\rho_\infty$, 
$L^2_{\rho_\infty}(\Omega)$ is a Hilbert space with inner product
\begin{equation*}
\dlab f,g\drab = \int_\Omega \overline{f(x)} g(x) \rho_\infty(x)\,dx\,.
\end{equation*}
We denote by $\nM$ the gradient with respect to the Riemannian structure on $M$:
\begin{equation}
\nM f(x) = \ID(x)\nabla f(x)\quad\text{for all}\,\,\, x\in\Omega\,,
\end{equation}
and by $|\cdot|_{_M}$ its norm on $M$:
\begin{equation}
|\nM f(x)|^2_{_M}=\overline{\nM f(x)}\cdot\ID(x)^{-1}\nM f(x)=\overline{\nabla f(x)}\cdot \ID(x)\nabla f(x)\,.
\end{equation}

In what follows the geometry of $M$ as well as the Euclidean geometry of $\Omega$ will enter via various distance functions.
More precisely, let $x_0$ be a point on the Riemannian manifold $M$. We then consider, for all $x \in \Omega$,  the distance to a point, $d_M(x)=\dist_M(x,x_0)$ and (see e.g. \cite{Ag} for details) the distance to the "boundary":
\begin{equation*}
\delta_M(x)=\text{dist}_M(x,\partial\Omega) := \sup_K \{ d_M(x, \Omega\setminus K) |\text{ $K$ is a compact subset of } \Omega\}.
\end{equation*}
  These functions are distance functions i.e they have the property that
\begin{equation}
\big|d_M(x)-d_M(x')\big|\leq \text{dist}_M(x,x')\quad\text{and}\quad \big|\delta_M(x)-\delta_M(x')\big|\leq \text{dist}_M(x,x')
\end{equation}
for all $x,x'\in\Omega$. Hence, by Rademacher's theorem,  the two functions are (a.e.) differentiable with:
\begin{equation}\label{E:2.1.4}
\big|\nM d_M\big|_{_M}\leq1\quad\text{and}\quad\big|\nM \delta_M\big|_{_M}\leq1\,.
\end{equation}

Note that, from the geometric point of view, $M$ is in exactly one of the following three cases:
\begin{itemize}
\item[\textbf{(C1)}] $M$ is geodesically complete, which by the Hopf-Rinow theorem is equivalent to $\delta_M(x)=\infty$ for every $x\in\Omega$;\label{geodesic_case1}
\item[\textbf{(C2)}] \label{geodesic_case2} 
$M$ is geodesically incomplete and 
\begin{equation*}
\text{diam}(M)=\sup\big\{\!\dist_M(x,y)\,\big|\,x,y\in\Omega\big\}<\infty\,;
\end{equation*}
\item[\textbf{(C3)}] \label{geodesic_case3}
$M$ is geodesically incomplete and 
\begin{equation*}
\text{diam}(M)=\sup\big\{\!\dist_M(x,y)\,\big|\,x,y\in\Omega\big\}=\infty\,.
\end{equation*}
\end{itemize}
For $A,B \subset \mathbb R^d$ we denote $d(A,B) = \inf_{x \in A,y\in B}|x-y|$.
We shall also employ the standard Euclidean distance  to the boundary, $\delta(x)$, on  $\Omega$ :
$\delta(x)= \inf_{y \in \partial \Omega}|x-y|$.

There are many abstract criteria ensuring stochastic completeness for $H_0$ as given in the previous section, 
respectively essential self-adjointness for 
\begin{equation}\label{E:defnH_2}
H=H_0+V\,,\quad\text{where } V(x) \geq 0\,,\,\,\, V \in L^{\infty}_{loc}(\Omega) 
\end{equation}
(\cite{G2,G1,GM,MV,RS, Sch} and
 references therein). We shall use two such criteria which, in spite of the fact that they answer two different questions, are very similar.

The first one is  the so-called Liouville property: 
\begin{theorem}\label{SC}\cite[\S1.3]{G1}
The following statements are equivalent:
\begin{itemize}
\item[(i)] $N=\big(\Omega,\ID^{-1},\rho_\infty(x)\,dx \big)$ is stochastically complete;
\item[(ii)] There exists $E<0$ such that the only bounded, classical solution of $(H_0-E)\psi_E=0$ is $\psi_E=0$;
\item[(iii)] For any $E<0$ it holds that the only bounded classical solution of $(H_0-E)\psi_E=0$ is $\psi_E=0$.
\end{itemize}
\end{theorem}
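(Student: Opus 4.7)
The implication (iii) $\Rightarrow$ (ii) is immediate, so the plan is to prove (i) $\Rightarrow$ (iii) and (ii) $\Rightarrow$ (i), the latter by contrapositive. Throughout, I write $\lambda=-E>0$ and let $P_t=e^{-tH_0^F}$ denote the Friedrichs (Dirichlet) heat semigroup, which by assumption is Markovian.

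\medskip

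\emph{Plan for (i) $\Rightarrow$ (iii).} Suppose $N$ is stochastically complete, so that $P_t 1=1$ for all $t>0$, and let $\psi$ be a bounded classical solution of $(H_0+\lambda)\psi=0$. Consider
\begin{equation*}
u(x,t):=e^{\lambda t}\psi(x),
\end{equation*}
which satisfies $\partial_t u + H_0 u = 0$ with $u(\cdot,0)=\psi$. By the Markov property and conservativeness, $P_t$ is a contraction on $L^\infty$: $\lvert P_t\psi\rvert\le\|\psi\|_\infty P_t 1 = \|\psi\|_\infty$. The key step is to invoke uniqueness of bounded solutions of the Cauchy problem for $\partial_t + H_0$, which (as a by-product of Theorem \ref{SC} itself in its dynamical formulation) holds precisely under stochastic completeness. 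Thus $P_t\psi = e^{\lambda t}\psi$, so
\begin{equation*}
\|\psi\|_\infty \ge \|P_t\psi\|_\infty = e^{\lambda t}\|\psi\|_\infty \quad\text{for all } t>0,
\end{equation*}
forcing $\psi\equiv 0$.

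\medskip

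\emph{Plan for (ii) $\Rightarrow$ (i), by contrapositive.} Assume that $N$ is not stochastically complete, so there exist $x_0\in\Omega$ and $t_0>0$ with $(P_{t_0}1)(x_0)<1$. Define
\begin{equation*}
w(x,t):=1-(P_t 1)(x),\qquad 0\le w\le 1,\quad w\not\equiv 0,
\end{equation*}
which satisfies $\partial_t w+H_0 w=0$ with $w(\cdot,0)=0$. For any $\lambda>0$ set
\begin{equation*}
\psi_\lambda(x):=\int_0^\infty e^{-\lambda t}\,w(x,t)\,dt.
\end{equation*}
Then $0\le\psi_\lambda\le 1/\lambda$, so $\psi_\lambda$ is bounded. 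Integrating by parts against a test function $\varphi\in C_0^\infty(\Omega)$ and using $\partial_t w=-H_0 w$ together with $w(\cdot,0)=0$ and the fact that $P_t 1\to 1$ as $t\to 0^+$, one obtains $(H_0+\lambda)\psi_\lambda=0$ weakly. Smoothness of $\mathbb D$ and $\rho_\infty$ gives elliptic regularity for the uniformly elliptic operator $H_0+\lambda$ on compact subsets of $\Omega$, hence $\psi_\lambda$ is a classical solution. Finally, $\psi_\lambda\not\equiv 0$: if $\psi_\lambda(x)=0$ for all $x$, then uniqueness of the Laplace transform together with continuity of $t\mapsto w(x,t)$ would force $w\equiv 0$, contradicting $w(x_0,t_0)>0$. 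Thus (ii) fails for every $E=-\lambda<0$.

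\medskip

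\emph{Main obstacles.} Two technical points require care. First, the uniqueness of bounded solutions of the heat Cauchy problem invoked in (i) $\Rightarrow$ (iii) is the analytic twin of stochastic completeness; rigorously, one compares $e^{\lambda t}\psi$ with $P_t\psi$ by writing their difference as a bounded solution with zero initial data and applying a maximum principle adapted to the weighted manifold $N$ (this is where, in Grigor'yan's treatment, a Khasminskii-type exhaustion argument enters). Second, in (ii) $\Rightarrow$ (i) one must justify differentiating under the integral sign to pass $H_0$ inside the $t$-integral defining $\psi_\lambda$; this is handled by local elliptic estimates on compact sets $K\Subset\Omega$, using that $w(\cdot,t)$ and its spatial derivatives are uniformly bounded on $K$ by standard parabolic regularity for the Dirichlet heat kernel on $N$. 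These are the places where, despite the simple statement, the completeness/incompleteness dichotomy of $M$ would complicate a fully self-contained proof — and why the theorem is quoted from \cite{G1}.
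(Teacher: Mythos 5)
First, note that the paper itself offers no proof of this statement: Theorem \ref{SC} is quoted verbatim from Grigor'yan \cite[\S1.3]{G1} and used as a black box. So your proposal cannot be compared to an in-paper argument; it has to stand on its own as a reconstruction of Grigor'yan's proof. Your direction (ii) $\Rightarrow$ (i) does stand on its own: the Laplace-transform construction $\psi_\lambda=\int_0^\infty e^{-\lambda t}\,(1-P_t1)\,dt$ is exactly the standard argument, the computation $(H_0+\lambda)\psi_\lambda = -w(\cdot,0)=0$ is correct (using $0\le w\le 1$ so the boundary term at $t=\infty$ vanishes), and the regularity and non-vanishing points you flag are genuinely routine. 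This half is fine, and it even delivers the stronger conclusion that (ii) fails for \emph{every} $E<0$, which is what makes the ``some $\lambda$'' and ``all $\lambda$'' versions collapse into one.

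The gap is in (i) $\Rightarrow$ (iii). As written, the step ``$P_t\psi=e^{\lambda t}\psi$ by uniqueness of bounded solutions of the Cauchy problem, which holds precisely under stochastic completeness'' is circular: uniqueness in the bounded Cauchy problem is itself one of the equivalent formulations of stochastic completeness in Grigor'yan's theorem, and deriving it from conservativeness of $H_0^F$ is not easier than the implication you are trying to prove --- it is the same implication in disguise. You acknowledge this in the ``obstacles'' paragraph, but acknowledging the hole is not the same as filling it. The self-contained route avoids Cauchy uniqueness altogether: take an exhaustion $\Omega_k\uparrow\Omega$ by relatively compact sets, let $P_t^{\Omega_k}$ be the Dirichlet heat semigroup on $\Omega_k$, and apply the parabolic maximum principle on the cylinder $\Omega_k\times[0,t]$ to the caloric function $u(x,s)=e^{\lambda s}\psi(x)$, comparing it with the caloric barrier
\begin{equation*}
h_k(x,s)=\|\psi\|_\infty\,P_s^{\Omega_k}1(x)+e^{\lambda t}\|\psi\|_\infty\bigl(1-P_s^{\Omega_k}1(x)\bigr),
\end{equation*}
which dominates $u$ on the parabolic boundary. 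Evaluating at $s=t$ and letting $k\to\infty$ gives $e^{\lambda t}\psi(x)\le \|\psi\|_\infty\,P_t1(x)+e^{\lambda t}\|\psi\|_\infty\,(1-P_t1(x))$; only now does conservativeness $P_t1\equiv 1$ enter, yielding $\psi(x)\le e^{-\lambda t}\|\psi\|_\infty\to 0$, and the same applied to $-\psi$ finishes the proof. (A minor additional point: even granting Cauchy uniqueness, $e^{\lambda t}\psi$ is unbounded on $[0,\infty)$, so it must be invoked on finite time intervals.) Without some version of this exhaustion-and-comparison argument, your (i) $\Rightarrow$ (iii) is an appeal to the theorem being proved.
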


The second criterion is  the basic criterion for essential self-adjointness:
\begin{theorem}\label{ES}\cite{RS}
The following statements are equivalent:
\begin{itemize}
\item[(i)] $H$ is essentially self-adjoint;
\item[(ii)] There exists $E<0$ such that the only solution $\psi_E\in L^2_{\rho_\infty}(\Omega)$ of
\begin{equation}
\dlab (H-E)\varphi,\psi_E\drab=0 \quad\text{for all }\varphi\in C_0^\infty(\Omega)
\end{equation}
is $\psi=0$;
\item[(iii)] For any $E<0$ it holds that the only solution $\psi_E\in L^2_{\rho_\infty}(\Omega)$ of
\begin{equation}
\dlab (H-E)\varphi,\psi_E\drab=0 \quad\text{for all }\varphi\in C_0^\infty(\Omega)
\end{equation}
is $\psi_E=0$.
\end{itemize}
\end{theorem}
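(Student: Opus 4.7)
The plan is to reduce Theorem~\ref{ES} to the standard fundamental criterion for self-adjointness, specialized to the semi-bounded symmetric operator $H$ on $C_0^\infty(\Omega)\subset L^2_{\rho_\infty}(\Omega)$. The first step is to verify that $H$ is symmetric and non-negative there: integration by parts against the density $\rho_\infty$ gives
\begin{equation*}
\dlab H_0\varphi,\varphi\drab = \int_\Omega |\nM \varphi(x)|^{2}_{_M}\,\rho_\infty(x)\,dx \geq 0
\end{equation*}
for every $\varphi\in C_0^\infty(\Omega)$, and the hypothesis $V\geq 0$, $V\in L^\infty_{\loc}(\Omega)$ preserves symmetry and positivity. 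The key reformulation is that, since $E$ is real and $H$ is symmetric on $C_0^\infty(\Omega)$, the weak equation $\dlab(H-E)\varphi,\psi_E\drab = 0$ for all $\varphi\in C_0^\infty(\Omega)$ is exactly the statement $\psi_E \in \dom(H^*)$ with $H^*\psi_E = E\psi_E$. Thus (ii) and (iii) express the vanishing of the deficiency subspace $\ker(H^*-E)$ for some, respectively all, $E<0$.

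For (i)$\Rightarrow$(iii): if $H$ is essentially self-adjoint, then $\bar H$ is self-adjoint and non-negative, so $\sigma(\bar H)\subset[0,\infty)$; for every $E<0$ the operator $\bar H - E$ has a bounded, everywhere-defined inverse, and consequently $\ker(H^*-E) = \ker(\bar H - E) = \{0\}$. The implication (iii)$\Rightarrow$(ii) is trivial.

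The heart of the proof is (ii)$\Rightarrow$(i). Suppose $\ker(H^*-E_0) = \{0\}$ for some $E_0<0$. The orthogonal decomposition
\begin{equation*}
L^2_{\rho_\infty}(\Omega) = \overline{\ran(H-E_0)} \oplus \ker(H^*-E_0)
\end{equation*}
forces $\ran(H-E_0)$ to be dense in $L^2_{\rho_\infty}(\Omega)$. Semi-boundedness supplies the coercive estimate
\begin{equation*}
\|(H-E_0)\varphi\|_{L^2_{\rho_\infty}} \geq |E_0|\,\|\varphi\|_{L^2_{\rho_\infty}} \quad\text{for all }\varphi\in C_0^\infty(\Omega),
\end{equation*}
so upon passing to the closure $\bar H - E_0$ has closed range; combined with density this gives $\ran(\bar H - E_0) = L^2_{\rho_\infty}(\Omega)$. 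Then $(\bar H - E_0)^{-1}$ is everywhere-defined, bounded, and symmetric, hence self-adjoint, which forces $\bar H$ to be self-adjoint, i.e., $H$ is essentially self-adjoint. Once $\bar H$ is self-adjoint, the first direction gives $\ker(H^*-E)=\{0\}$ for every $E<0$, closing the loop. The only delicate point---the main obstacle---is this middle step: one must upgrade the absence of a single $L^2_{\rho_\infty}$ weak eigenvector at one negative energy into surjectivity of $\bar H - E_0$, and this leverages precisely the fact that $-E_0>0$ lies strictly above the lower bound of $H$, which is exactly what makes the one-sided criterion suffice for semi-bounded operators.
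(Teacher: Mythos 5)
Your argument is correct: it is the standard proof of the fundamental criterion for essential self-adjointness of a semi-bounded symmetric operator (the identification of weak solutions with deficiency vectors, the decomposition $L^2_{\rho_\infty}(\Omega)=\overline{\ran(H-E_0)}\oplus\ker(H^*-E_0)$, and the coercivity estimate $\|(H-E_0)\varphi\|\geq|E_0|\,\|\varphi\|$ upgrading dense range to surjectivity of $\bar H-E_0$). The paper does not prove Theorem~\ref{ES} at all --- it is quoted from \cite{RS} --- and your proof is exactly the argument that reference supplies, so there is nothing to compare beyond noting that you have correctly filled in the cited result.
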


We would like to stress that the only difference between the two criteria is the fact 
that $\psi_E$ belongs to a different space:  ${L^{\infty}(\Omega)}$ and ${L^2_{\rho_\infty}(\Omega)}$, respectively. 
 As a consequence, a method of verifying one criterion is 
likely to also work for the other.
Notice that for any $\varphi\in C_0^\infty(\Omega)$, we have
\begin{equation}
\dlab \varphi, (H_0-E)\psi\drab=\dlab (H_0-E)\varphi,\psi\drab=0\,,  
\end{equation}
and so naturally classical solutions are also weak solutions.

An important technical ingredient of our analysis is contained in the following two lemmas stated for $H$ and valid in 
particular also for $H_0$. Let $ h [\phi, \psi]  $ be the quadratic form associated to  $H$ i.e. 
for $ \phi, \psi \in C_0^\infty(\Omega)$: 
\begin{equation}
h[\phi,\psi]=\int_\Omega \left(\overline{\nabla \phi(x)}\cdot \ID(x)\nabla \psi(x)
+ \overline{\phi(x)} V(x)\psi(x)\right) \rho_\infty(x)\,dx.
\end{equation}
As a quadratic form associated to a symmetric positive operator, $h$ is closable \cite{F}  and with a slight abuse of notation we shall denote
by the same symbol its closure, with domain $Q(h)$. Finally, we denote by $\mathcal L_0(\Omega)$ the set of Lipschitz continuous functions with
compact support inside $\Omega$.

\begin{lemma}[Localization Lemma]\label{L:L}
Suppose $E\in\IR$, $\psi_E$ is a weak solution of
\begin{equation}
(H-E)\psi_E=0\,,
\end{equation}
and $f=\bar f\in \mathcal L_0(\Omega)$. Then $f\psi_E \in Q(h)$ and
\begin{equation}\label{E:quad_ident}
h[f\psi_E,f\psi_E]-E\dlab f\psi_E,f\psi_E\drab =\dlab \psi_E,|\nM f|_{_M}^2 \psi_E\drab\,.
\end{equation}
\end{lemma}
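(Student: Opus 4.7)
The plan has two preparatory steps and one computational step. First, I invoke interior elliptic regularity: since $\rho_\infty,\ID\in C^\infty(\Omega)$ and $V\in L^\infty_{\rm loc}(\Omega)$, the weak equation $\dlab (H-E)\varphi,\psi_E\drab=0$ (for all $\varphi\in C_0^\infty(\Omega)$) places $\psi_E$ in $H^2_{\rm loc}(\Omega)$ and yields $H_0\psi_E=(E-V)\psi_E$ a.e.\ in $\Omega$. Second, since $f\in\mathcal L_0(\Omega)$ has compact support $K\subset\Omega$, the product $f\psi_E$ lies in $H^1$ with compact support in $\Omega$; mollifying $f$ produces $f_n\in C_0^\infty(\Omega)$ with supports inside a fixed compact neighborhood of $K$ contained in $\Omega$, so that $f_n\psi_E\in C_0^\infty(\Omega)$ and $f_n\psi_E\to f\psi_E$ in the form norm of $h$. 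Hence $f\psi_E\in Q(h)$.

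The computational step rests on the pointwise Leibniz identity, valid a.e.\ because $f$ is real and $\ID$ is symmetric:
\[
|\nM(f\psi_E)|_{_M}^2 \;=\; f^2\,|\nM\psi_E|_{_M}^2 \;+\; |\psi_E|^2\,|\nM f|_{_M}^2 \;+\; \tfrac12\,\nabla(f^2)\cdot\ID\nabla|\psi_E|^2.
\]
Multiplying by $\rho_\infty$, integrating, and integrating by parts in the cross term (legitimate since $f^2$ is compactly supported in $\Omega$) converts the last term into
\[
\tfrac12\int_\Omega \nabla(f^2)\cdot\ID\nabla|\psi_E|^2\,\rho_\infty\,dx \;=\; \tfrac12\int_\Omega f^2\,\rho_\infty\,H_0|\psi_E|^2\,dx.
\]
A direct Green-type computation from $H_0\psi_E=(E-V)\psi_E$ and its complex conjugate yields the Bochner-type identity
\[
H_0|\psi_E|^2 \;=\; 2(E-V)|\psi_E|^2 \;-\; 2\,|\nM\psi_E|_{_M}^2.
\]
Substituting back, the two $\int f^2|\nM\psi_E|_{_M}^2\rho_\infty$ contributions cancel, the $V$-terms telescope with the potential piece of $h[f\psi_E,f\psi_E]$, and what remains is exactly $E\dlab f\psi_E,f\psi_E\drab+\dlab\psi_E,|\nM f|_{_M}^2\,\psi_E\drab$.

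The only genuinely delicate point is the regularity bookkeeping: one needs $\psi_E\in H^2_{\rm loc}$ so that $|\nM\psi_E|_{_M}^2$ is locally integrable and both sides of the Bochner identity make sense a.e. Interior elliptic regularity supplies this from the standing hypotheses on $\ID$, $\rho_\infty$ and $V$. Once that is secured, every integration by parts is justified by the compact support of $f$ (or $f^2$) inside $\Omega$, so no boundary information about $\psi_E$ or $\rho_\infty$ is ever used --- which is precisely what makes the identity a genuinely local tool, suitable for the Agmon-type arguments that will follow in the next section.
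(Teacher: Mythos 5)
The paper does not actually prove Lemma \ref{L:L}: it declares the proof standard and refers to \cite{NN1} (and to \cite{Wie,Ag,KSWW}) for the computation, so there is no in-text argument to compare against. Your reconstruction is the standard one and is essentially correct: the Leibniz expansion of $|\nM(f\psi_E)|_{_M}^2$, the integration by parts on the cross term, and the identity $H_0|\psi_E|^2=2(E-V)|\psi_E|^2-2|\nM\psi_E|_{_M}^2$ combine exactly as you say, the $V$-terms cancel against the potential part of $h[f\psi_E,f\psi_E]$, and what remains is \eqref{E:quad_ident}. One point needs repair. With $V$ only in $L^\infty_{\mathrm{loc}}(\Omega)$, interior regularity gives $\psi_E\in W^{2,p}_{\mathrm{loc}}(\Omega)$ for every $p<\infty$ (hence $C^{1,\alpha}_{\mathrm{loc}}$), but not $C^\infty$; therefore $f_n\psi_E$ is \emph{not} in $C_0^\infty(\Omega)$ as you assert. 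The fix is immediate and does not change the conclusion: mollify the product $f\psi_E$ itself, which is a compactly supported $W^{1,2}$ function, and use that $\ID$, $\rho_\infty$ and $V$ are bounded on a compact neighbourhood of $\supp f$ to upgrade $W^{1,2}$-convergence of the mollifications to convergence in the form norm; this is what the paper's phrase ``by density $f\psi_E\in Q(h)$'' is shorthand for. I would also note, for comparison, that the route taken in \cite{NN1} tests the weak equation against (a mollification of) $\varphi=f^2\psi_E$ and uses the pointwise identity $\Re\big(\overline{\nabla(f^2\psi_E)}\cdot\ID\nabla\psi_E\big)=|\nM(f\psi_E)|_{_M}^2-|\psi_E|^2\,|\nM f|_{_M}^2$, which reaches \eqref{E:quad_ident} without ever forming second derivatives of $|\psi_E|^2$; your Bochner-type route is equivalent but leans slightly harder on the $W^{2,p}_{\mathrm{loc}}$ regularity. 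Your closing observation that no boundary information about $\psi_E$ or $\rho_\infty$ enters is exactly the point of the lemma.
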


As in the case of many other fundamental identities, \eqref{E:quad_ident} has been rediscovered
many times (see e.g. \cite{KSWW, BMS, Ag}). In particular, in the classical paper \cite{Wie} it appeared as an essential ingredient
in the study of essential self-adjointness (see also \cite{J,S-H,W,Sim}). It is also the key
ingredient of Agmon's \cite{Ag} exponential estimates for generalized eigenfunctions. In all applications of 
\eqref{E:quad_ident} the key step is the estimate from above of the right-hand side, and here
the choice of function $f$ is crucial. Agmon's insight here was that the choice of $f$ must be linked
to a certain Riemannian metric (aka the Agmon distance) dictated by the coefficients of the PDE involved. 
The proof of this lemma is standard (see the papers quoted above and \cite{NN1} for the proof in the most elementary setting)  and we will not repeat it here. 
Note, however,
that for a function $\psi$ with compact support, $\psi\in W^{1,2}(N)$ if and only if $\psi\in W^{1,2}(\Omega)$, where
$W^{1,2}(\Omega)$ is the usual (Euclidean and not weighted) Sobolev space and that for 
$B(x_0,r):= \{x\in \Omega| |x-x_0| \leq r\}$ the set of uniformly Lipschitz continuous functions coincides with 
 $W^{1,\infty}(B(x_0,r))$.

Adapting Agmon's insight to our setting we use Lemma \ref{L:L} to obtain the following:
\begin{lemma}[Basic Inequality]\label{L:B}
Assume that there exist $E_0>-\infty$ and a function $B\,:\,\Omega\,\rightarrow\,[0,\infty)$ such that
for all $\varphi\in C_0^\infty(\Omega)$, it holds that:
\begin{equation}\label{E:h}
\dlab \varphi, (H-E_0)\varphi\drab \geq \int_\Omega |\varphi(x)|^2 B(x)\rho_\infty(x)\,dx\,.
\end{equation}
Let $\psi_E$ be a weak solution of $\big(H-E\big)\psi_E=0$ for some $E< E_0$,
and let $g$ be a real-valued, Lipschitz continuous function on $\Omega$ satisfying:
\begin{equation}\label{E:3.1.4}
\big|\nM g(x)\big|_{_M}^2\leq  B(x)+\frac{|E-E_0|}{2}\,.
\end{equation}
Then
\begin{equation}\label{E:3.1.5}
\dlab\psi_E , f^2\psi_E\drab\leq \frac{2}{|E-E_0|} \dlab\psi_E,|m|\psi_E\drab\,,
\end{equation}
where $f=e^g\phi$ with $\phi\,:\,\Omega\rightarrow[0,1]$, $\phi\in\mathcal L_0(\Omega)$, and
\begin{equation}\label{E:3.1.7}
m=e^{2g}\big(2\phi(\nM g)\cdot_{_M}(\nM\phi)+|\nM \phi|_{_M}^2 \big)\,.
\end{equation}
\end{lemma}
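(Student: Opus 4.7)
The plan is to apply the Localization Lemma (Lemma \ref{L:L}) with the specific cutoff $f = e^{g}\phi$, combine the resulting identity with the form lower bound \eqref{E:h}, and then use the pointwise constraint \eqref{E:3.1.4} to cancel the unwanted terms and isolate $\dlab \psi_E, f^2\psi_E\drab$. First I would verify that $f\in\mathcal L_0(\Omega)$: the support of $\phi$ is a compact $K\subset\Omega$, and $g$, being Lipschitz on $\Omega$, is bounded on $K$; hence $e^g$ is Lipschitz on $K$, and $f$ inherits Lipschitz continuity together with compact support. Lemma \ref{L:L} then yields $f\psi_E \in Q(h)$ and
\begin{equation*}
h[f\psi_E, f\psi_E] - E\dlab f\psi_E, f\psi_E\drab = \dlab \psi_E, |\nM f|_{_M}^2 \psi_E\drab.
\end{equation*}

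Using the reality of $g$ and $\phi$, a Leibniz expansion gives $\nM f = e^g(\phi\nM g + \nM\phi)$ and therefore
\begin{equation*}
|\nM f|_{_M}^2 = e^{2g}\phi^2|\nM g|_{_M}^2 + m = f^2|\nM g|_{_M}^2 + m,
\end{equation*}
with $m$ exactly the expression from \eqref{E:3.1.7}. Next I would extend the hypothesis \eqref{E:h} from $C_0^\infty(\Omega)$ to the element $f\psi_E \in Q(h)$ by density of $C_0^\infty(\Omega)$ in $Q(h)$ together with Fatou's lemma applied to the nonnegative integrand $|\cdot|^2 B\,\rho_\infty$; this produces
\begin{equation*}
h[f\psi_E,f\psi_E] \geq E_0\dlab f\psi_E, f\psi_E\drab + \dlab \psi_E, f^2 B\,\psi_E\drab.
\end{equation*}
Substituting this lower bound into the localization identity and using $E_0 - E = |E-E_0|$ (since $E < E_0$), I would rearrange to obtain
\begin{equation*}
\dlab \psi_E, f^2|\nM g|_{_M}^2\psi_E\drab + \dlab \psi_E, m\psi_E\drab \geq |E-E_0|\dlab \psi_E, f^2\psi_E\drab + \dlab \psi_E, f^2 B\,\psi_E\drab.
\end{equation*}

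Applying the pointwise bound \eqref{E:3.1.4} to the first term on the left yields
\begin{equation*}
\dlab \psi_E, f^2|\nM g|_{_M}^2\psi_E\drab \leq \dlab \psi_E, f^2 B\,\psi_E\drab + \tfrac{|E-E_0|}{2}\dlab \psi_E, f^2\psi_E\drab,
\end{equation*}
so the $f^2 B$ contributions cancel and half of the $|E-E_0|f^2$ contribution is absorbed, leaving
\begin{equation*}
\dlab \psi_E, m\psi_E\drab \geq \tfrac{|E-E_0|}{2}\dlab \psi_E, f^2\psi_E\drab.
\end{equation*}
Since $m$ is real-valued, $\dlab \psi_E, m\psi_E\drab = \int_\Omega |\psi_E|^2 m\,\rho_\infty \leq \int_\Omega |\psi_E|^2 |m|\rho_\infty = \dlab \psi_E, |m|\psi_E\drab$, and dividing by $|E-E_0|/2$ yields \eqref{E:3.1.5}. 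The only genuinely delicate point is the extension of \eqref{E:h} from $C_0^\infty(\Omega)$ to $Q(h)$—this is a lower semicontinuity argument rather than an upper bound, which is why $B$ being merely nonnegative and measurable is enough. Every other step (the verification that $f\in\mathcal L_0(\Omega)$, the Leibniz expansion, the algebraic cancellation, and the domination $m \leq |m|$) is routine once the Agmon-style choice $f = e^g\phi$ is in hand.
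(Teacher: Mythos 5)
Your proposal is correct and follows essentially the same route as the paper: apply the Localization Lemma to $f=e^g\phi$, add and subtract $E_0\dlab f\psi_E,f\psi_E\drab$ so that the form bound \eqref{E:h} can be used, expand $|\nM f|_{_M}^2=f^2|\nM g|_{_M}^2+m$, and cancel via \eqref{E:3.1.4}. The extra care you take with the density/lower-semicontinuity extension of \eqref{E:h} to $Q(h)$ and the verification that $f\in\mathcal L_0(\Omega)$ only fills in details the paper leaves implicit.
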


\begin{proof}
Recall that $Q(h)$ is the domain of $h$. By density $f\psi_E \in Q(h)$, and from \eqref{E:quad_ident}, \eqref{E:h}, by adding and subtracting $E_0\dlab f\psi_E,f\psi_E\drab$, one has:
\begin{equation*}
\frac{|E-E_0|}{2}\dlab f\psi_E,f\psi_E\drab \leq \dlab \psi_E,|\nM f|_{_M}^2 \psi_E\drab - \dlab f\psi_E,  \left( B(x)+\frac{|E-E_0|}{2} \right) f\psi_E\drab
\end{equation*}
Further, inserting   $f=e^g\phi$  in the first term of the r.h.s. of the above inequality and using \eqref{E:3.1.4} one obtains
\begin{equation*}
\frac{|E-E_0|}{2}\dlab f\psi_E,f\psi_E\drab \leq  \dlab\psi_E, |m| \psi_E\drab
\end{equation*}
which gives \eqref{E:3.1.5}.
\end{proof}

We use \eqref{E:3.1.5}, \eqref{E:3.1.7} to prove that $\psi_E =0$. The simplest scenario to achieve that (in some cases a 
slightly more sophisticated scenario 
is required to obtain better
results; see proofs of Theorems \ref{T:ess_sa_1} and \ref{T:4} below and  \cite{NN1}) runs as follows. Fix a compact $K$ in $\Omega$. 
Then choose a sequence of cut-off functions $\phi_n$ such that
$\phi_n|_K=1$. If in the limit $n\rightarrow \infty$ the r.h.s. of \eqref{E:3.1.5} goes to zero, then one obtains that $\psi_E|_K=0$. 
Doing this for a 
sequence of compacts exhausting $\Omega$, one arrives at the needed result. Notice that as the compact $K$ gets larger and 
larger, the condition 
$\phi_n|_K=1$ together with the condition that $\phi_n$ has compact support might imply the blow up  of $|\nM \phi|_{_M}$, which has to be 
compensated for by the decay of $e^{2g}\rho_\infty$.
In each particular case the main point is an appropriate choice of $f= e^g\phi$ in order to make the r.h.s. of \eqref{E:3.1.5} as 
small as possible.  In what follows, $g$ and $\phi$ will be chosen to be functions of the distance functions defined above.
For example, when we use the geometry of $M$, $g$ will  be chosen to be either a linear function of $d_M(x)$ or of the form
$G(\delta_M(x))$ where $G(t)$ satisfies the following condition (see \cite{NN1,NN2} for variants and examples):
\begin{definition}[$\Sigma$]
Let $G:(0,\infty)\rightarrow \mathbb R_-$ be locally Lipschitz continuous, satisfying:

$\Sigma_1$. There exists $1 \geq a_0 >0$ such that (a.e)
\begin{equation}\label{E:sigma1}
\frac{1}{t} \geq G'(t) \geq 0, \quad \text{for} \quad t \in (0,a_0); \quad
G'(t)=G(t)=0, \quad \text{for} \quad t >a_0
\end{equation}

$\Sigma_2$.
\begin{equation}\label{E:sigma2}
\sum_{m=1}^{\infty}4^{-m}e^{-2G(2^{-m}a_0)} =\infty.
\end{equation}
\end{definition}

In order to ensure that the functions $\phi$ have compact support we need the following assumption.
For $r,R \in (0, \infty)$ we denote by $S_{r,R}$ the level sets:
\begin{equation}
S_{r,R}=\big\{x\in\Omega\,\big|\, \delta_M(x)\geq r\text{ and }d_M(x)\leq R\big\}.
\end{equation}
 Notice that if $M$ is complete then for all  $R, r \in (0, \infty)$,
$$S_{r,R} =S_{R}:=\big\{x\in\Omega\,\big|\, d_M(x)\leq R\big\}\,,$$ and if $\text{diam}(M) <\infty$, for $R >\text{diam}(M)$,
$$S_{r,R}=S_{r} := \big\{x\in\Omega\,\big|\, \delta_M(x)\geq r\}\,.$$

\noindent
\textbf{Assumption} \textbf{(A).} 
\textit{For all  $R, r \in (0, \infty)$, $S_{r,R} \subset \Omega $ are compacts.}
\hspace{3mm}

We note that in the Euclidean case, i.e. if
$\mathbb D$ is the identity matrix, or more generally if $H_0$ is strongly elliptic, i.e. 
$ 0< c \leq  \mathbb D(x) \leq C< \infty$ uniformly for $x \in \Omega\,,$ this assumption holds true. 

Although in developing the approach outlined above we were inspired by Agmon theory \cite{Ag}, it has to 
be remarked that it has its roots in the approach initiated by Wienholtz  \cite{Wie} and further developed in \cite{J,S-H,W,Sim}. 

While in this paper we restricted ourselves to applying this approach to prove essential self-adjointness 
and stochastic completeness for drift-diffusion equation in domains of $\mathbb R^d$, it is an interesting program to see to what extent 
one can use the same strategy 
for partial differential operators on domains on more general manifolds. As mentioned in the Introduction, this program has already been partially carried out 
for essential self-adjointness of Schr\"odinger type operators on graphs and non-complete Riemannian manifolds. 
By adding the necessary technical steps, one can also 
reformulate and extend the results concerning stochastic completeness to a more general geometric setting, as done
in \cite{G2, G3, G1, GM} for complete Riemannian manifolds. 
As a future development, it would be very interesting to extend these results to, for example, the case of the Laplace-Beltrami operator 
in almost Riemannian geometry \cite{BL}.

\section{Criteria in terms of the Riemannian metric}\label{S:4}

\begin{theorem}\label{T:ess_sa_1}
Assume that:
\begin{itemize}
\item[(i)] $M=(\Omega,\mathbb D^{-1})$ is geodesically complete, i.e. $
\delta_M(x)\equiv\infty$,\\ 
or
\item[(ii)] Assumption \textbf{(A)} holds true and there exists $E<0$ such that for all $\varphi\in C_0^\infty(\Omega)$,
\begin{equation}\label{E:BL}
\dlab\varphi,(H-E)\varphi\drab -\int_\Omega |\nM g|_{_M}^2|\varphi|^2 \rho_\infty\,dx\geq 
\dlab\varphi,\varphi\drab
\end{equation}
where $H=H_0+V$ and $g(x)=G\big(\delta_M(x)\big)$ for some $G$ satisfying condition {\rm (}$\Sigma${\rm )}. 
\end{itemize}
Then $H$ is essentially self-adjoint in $L^2_{\rho_\infty}(\Omega)$.
\end{theorem}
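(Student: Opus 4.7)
My plan is to verify the essential self-adjointness criterion Theorem~\ref{ES}(ii) for some $E<0$: show that any weak $L^2_{\rho_\infty}(\Omega)$ solution $\psi_E$ of $(H-E)\psi_E=0$ must vanish. The common tool in both cases is the Basic Inequality (Lemma~\ref{L:B}), applied with a carefully chosen weight $g$ and a sequence of Lipschitz cut-offs $\phi_n\in\mathcal L_0(\Omega)$; letting $\phi_n\nearrow 1$ while arranging that the right-hand side of \eqref{E:3.1.5} tends to zero forces $\psi_E=0$.

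\emph{Case (i).} Since $V\ge 0$ and the quadratic form of $H_0$ is nonnegative, $H\ge 0$; hence for any $E<E_0<0$ one has $\dlab\varphi,(H-E_0)\varphi\drab\ge 0$, which verifies \eqref{E:h} with $B\equiv 0$. Take $g\equiv 0$, for which \eqref{E:3.1.4} is trivial. By Hopf--Rinow, geodesic completeness of $M$ together with the agreement of the manifold and Euclidean topologies on $\Omega$ guarantees that each $S_R=\{x\in\Omega : d_M(x)\le R\}$ is compact in $\Omega$. Pick a smooth $\chi\colon[0,\infty)\to[0,1]$ with $\chi\equiv 1$ on $[0,1]$, $\chi\equiv 0$ on $[2,\infty)$, and $|\chi'|\le C$, and set $\phi_n(x)=\chi(d_M(x)/n)$. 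Then $\phi_n\in\mathcal L_0(\Omega)$, and \eqref{E:2.1.4} yields $|\nM\phi_n|_M\le C/n$. Plugging into \eqref{E:3.1.5} gives
\begin{equation*}
\dlab\psi_E,\phi_n^{2}\psi_E\drab \le \frac{2C^{2}}{|E-E_0|\,n^{2}}\,\|\psi_E\|_{L^{2}_{\rho_\infty}(\Omega)}^{2}\to 0\quad(n\to\infty),
\end{equation*}
and Fatou's lemma forces $\psi_E=0$.

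\emph{Case (ii).} Fix $E<-1$, set $E_0=E+1$ and $B=|\nM g|_M^{2}$. Then hypothesis \eqref{E:BL} is exactly the coercivity \eqref{E:h}, and \eqref{E:3.1.4} holds with room to spare since $|\nM g|_M^{2}=B\le B+\tfrac12$. I take the cut-offs $\phi_{n,R}(x)=\omega_n(\delta_M(x))\chi(d_M(x)/R)$, with $\chi$ as in case (i) and $\omega_n$ a Lipschitz profile vanishing for $\delta_M\le 2^{-n}a_0$ and equal to $1$ for $\delta_M\ge a_0/2$ (to be further specified below). Assumption~\textbf{(A)} is exactly what ensures that $\phi_{n,R}$ is supported in the compact set $S_{2^{-n}a_0,\,2R}\subset\Omega$. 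Sending $R\to\infty$ first, using $\psi_E\in L^{2}_{\rho_\infty}$, $e^{2g}\le 1$, and the bound $|\nM g|_M\le G'(\delta_M)\le 1/\delta_M$, removes the outer cut-off and reduces the problem to a one-dimensional estimate of the form
\begin{equation*}
\int_\Omega e^{2g}\omega_n(\delta_M)^{2}|\psi_E|^{2}\rho_\infty\,dx \le C \int_\Omega e^{2g}\Bigl(\tfrac{1}{\delta_M}|\omega_n'(\delta_M)|+|\omega_n'(\delta_M)|^{2}\Bigr)|\psi_E|^{2}\rho_\infty\,dx.
\end{equation*}

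The main obstacle, and the point where condition $(\Sigma)$ is indispensable, is to choose $\omega_n$ so that the right-hand side above tends to $0$ as $n\to\infty$. A single sharp dyadic cut-off does not suffice, because the exponential weight $e^{2G(\delta_M)}$ on both sides then essentially cancels and $\Sigma_2$ never enters. The remedy, in the Agmon--Wienholtz spirit of \cite{NN1}, is to spread $\omega_n$ across all dyadic scales $\{[2^{-m}a_0,2^{-m+1}a_0]\}_{m=1}^{n}$ by taking it to be the normalised antiderivative of a weight $w_n(s)$ whose shape is dictated by $\Sigma$. A Cauchy--Schwarz step should then produce an upper bound of the form $C\bigl(\sum_{m=1}^{n}4^{-m}e^{-2G(2^{-m}a_0)}\bigr)^{-1}\|\psi_E\|_{L^{2}_{\rho_\infty}}^{2}$, which vanishes as $n\to\infty$ precisely because $\Sigma_2$ asserts the divergence of this series; monotone convergence on the left-hand side then gives $\int_\Omega e^{2g}|\psi_E|^{2}\rho_\infty\,dx=0$, and hence $\psi_E\equiv 0$ on $\Omega$.
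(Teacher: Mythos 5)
Your case (i) is correct and coincides with the paper's treatment of the complete case (C1): $H\ge 0$ gives \eqref{E:h} with $B\equiv 0$, $g\equiv 0$, and the cut-offs $\chi(d_M/n)$ (compactly supported by Hopf--Rinow) make the right-hand side of \eqref{E:3.1.5} of order $n^{-2}\|\psi_E\|^2$. The problem is case (ii), where the step you yourself flag as ``the main obstacle'' is exactly the step you do not carry out. Your plan is to use a \emph{single} profile $\omega_n$ smeared over all dyadic scales and to show $\sup_s e^{2G(s)}\bigl(s^{-1}|\omega_n'(s)|+|\omega_n'(s)|^2\bigr)\le C\bigl(\sum_{m\le n}4^{-m}e^{-2G(2^{-m}a_0)}\bigr)^{-1}$. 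But the choice that a Cauchy--Schwarz optimization of the \emph{square} term suggests, namely $\omega_n'\propto e^{-G}$, does not control the cross term: its weight is $\lambda_n e^{G(s)}/s$, whose supremum over $[2^{-n}a_0,a_0/2]$ equals $\lambda_n\,(t_ne^{-G(t_n)})^{-1}$ with $t_n=2^{-n}a_0$, and for admissible $G$ with, say, $te^{-G(t)}\sim(\ln(1/t))^{-1/2}$ (which satisfies both $\Sigma_1$ and $\Sigma_2$) this stays bounded away from $0$. So the asserted bound does not follow from the argument as sketched; the cross term $e^{2g}\delta_M^{-1}|\omega_n'|$, which has no analogue of the normalization constraint working in its favour, is a genuine gap. (The approach is salvageable: taking $\omega_n'(s)=\eta_n\,s\,e^{-2G(s)}$ with $\eta_n=\bigl(\int_{t_n}^{a_0/2}se^{-2G}\,ds\bigr)^{-1}$ makes \emph{both} weights $\le C\eta_n$, and $\eta_n^{-1}\ge \tfrac{a_0^2}{4}\sum_{m<n}4^{-m}e^{-2G(2^{-m}a_0)}\to\infty$ by $\Sigma_2$; but this is not the choice your ``shape dictated by $\Sigma$'' plus Cauchy--Schwarz naturally produces, and you would need to exhibit it and check both terms.)

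It is worth noting that the paper sidesteps this difficulty entirely by a different mechanism: instead of one smeared cut-off, it uses a \emph{sequence} of sharp dyadic cut-offs $\mathcal K_j$ supported on $\{r_{j+1}<\delta_M<r_j\}$ with $r_j=r_02^{-j}$, tied to outer cut-offs at $R_j=1/r_j$. On a single dyadic annulus the cross term $2\delta_M^{-1}|\nM\mathcal K_j|_{_M}$ and the square term $|\nM\mathcal K_j|_{_M}^2$ are automatically comparable, both of order $r_j^{-2}$, so $|m_j|\le C\,e^{2G(r_j)}r_j^{-2}$ on $A_j$ and \eqref{E:3.1.5} yields, for each $j$, an inequality of the form $c_j\int_K|\psi_E|^2\rho_\infty\,dx\le\int_{A_j}|\psi_E|^2\rho_\infty\,dx$ with $c_j=\mathrm{const}\cdot r_j^2e^{-2G(r_j)}$. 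The divergence required by $\Sigma_2$ is then exploited not through any single cut-off but by \emph{summing these inequalities over $j$}, using the disjointness of the annuli $A_j$ to bound the sum of the right-hand sides by $\|\psi_E\|_{L^2_{\rho_\infty}}^2<\infty$. If you repair your cross-term estimate as indicated above, your single-cut-off argument becomes a legitimate alternative; as written, the decisive inequality is only conjectured.
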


\begin{proof}
As discussed above, we note that 
\begin{equation*}
\diam(M)<\infty \qquad \text{iff} \qquad \sup_{x\in\Omega} d_M(x)<\infty\,,
\end{equation*}
and that $M$ geodesically complete implies $\diam(M)=\infty$ (though the converse is not true). 

We describe here the class of cut-off functions $\phi(x)$ which will be used in the application
of the Basic Inequality Lemma \ref{L:B}.  The fact that they have compact support is ensured by Assumption (\textbf{A}).
 We focus on the case (C3) (i.e. M is not complete and its diameter is infinite, see Section 3), in which 
hypothesis (ii) of the theorem must hold. Let $(r_j)_{j\geq 1}$ be a strictly decreasing sequence and
$(R_j)_{j\geq 1}$ a strictly increasing sequence, such that
\begin{equation}\label{E:3.2.1}
r_1\leq1\leq R_1\,,\quad \lim_{j\to\infty} r_j =0\quad\text{and}\quad \lim_{j\to\infty}R_j=\infty\,.
\end{equation}
Using these sequences, define the sets:
\begin{equation}\label{E:3.2.5}
\Omega_{r_j,R_j}=\big\{x\in\Omega\,\big|\, \delta_M(x)>r_j\text{ and }d_M(x)<R_j\big\}
\quad\text{and}\quad A_j=\Omega_{r_{j+1},R_{j+1}}\setminus \overline{\Omega_{r_j,R_j}}\,.
\end{equation}
Note that $A_j\cap A_m=\emptyset$ for all $j\neq m$,
and that for any compact set $K\subset\Omega$ there exists $j(K)\geq1$ such that
\begin{equation}
K\subset \Omega_{r_j,R_j}\qquad\text{for all } j> j(K)\,.
\end{equation}

Furthermore, define two sequences of smooth functions, 
\begin{equation}
k_j,\ell_j\,:\,(0,\infty)\rightarrow [0,1]
\end{equation}
such that:
\begin{equation}\label{E:3.2.2,3}
k_j(t)=
\begin{cases}
0\,, &\quad t<r_{j+1}\\
1\,, &\quad t>r_{j}
\end{cases}\qquad
\ell_j(t)=
\begin{cases}
0\,, &\quad t>R_{j+1}\\
1\,, &\quad t<R_{j}
\end{cases}
\end{equation}
\begin{equation}
\text{supp}(k_j')\subseteq(r_{j+1},r_j)\,\qquad \text{supp}(\ell_j')\subseteq(R_{j},R_{j+1})\,,
\end{equation}
and
\begin{equation}\label{E:lkderiv}
\big|k_j'(t)\big|\leq \frac{2}{r_j-r_{j+1}}\,,\qquad\big|\ell_j'(t)\big|\leq \frac{2}{R_{j+1}-R_{j}}\,.
\end{equation}
Denote $\mathcal K_j(x)=k_j(\delta_M(x))$, $\mathcal L_j(x)=\ell_j(d_M(x))$, and
\begin{equation}\label{E:3.2.8}
\phi_j(x)=\mathcal K_j(x)\mathcal L_j(x)\,.
\end{equation}

By construction, 
\begin{equation}\label{E:3.2.9}
\phi_j(x)=
\begin{cases}
1 & \quad x\in\Omega_{r_j,R_j}\\
0 &\quad x\notin \Omega_{r_{j+1},R_{j+1}}\,,
\end{cases}
\end{equation}
and
\begin{equation}\label{E:3.2.10}
\text{supp}\big(\nM\phi_j\big)\subset A_j\,.
\end{equation}
Using \eqref{E:2.1.4},  \eqref{E:3.2.2,3} and \eqref{E:lkderiv}, we can also conclude that
\begin{equation}\label{E:3.2.12}
\big|\nM \mathcal K_j(x)\big|_{_M}\leq\frac{2}{r_j-r_{j+1}}\quad\text{and}
\quad \nM\mathcal K_j(x)=0\text{ for }\delta_M(x)\notin\big(r_{j+1},r_j\big)\,,
\end{equation}
and
\begin{equation}\label{E:3.2.13}
\big|\nM\mathcal L_j(x)\big|_{_M}\leq\frac{2}{R_{j+1}-R_j}\quad\text{and}
\quad \nM\mathcal L_j(x)=0\text{ for }d_M(x)\notin\big(R_{j},R_{j+1}\big)\,.
\end{equation}

Let
\begin{equation}\label{E:3.2.15}
g(x)=G\big(\delta_M(x)\big)
\end{equation}
be as in hypothesis (ii) in the theorem (recall that we are working in case (C3), in which this hypothesis must hold).
We use this function $g$ in Lemma~\ref{L:B}, as well as $\phi(x)=\phi_j(x)$ as given in \eqref{E:3.2.8}. 
From \eqref{E:sigma1} we then find that
\begin{equation}\label{E:nablag}
\big|\nM g(x)\big|_{_M}\leq\frac{1}{\delta_M(x)}
\end{equation}
and, for $x\in\Omega$ such that $\delta_M(x)>r_j$,
\begin{equation}\label{E:3.2.17}
g(x)-G(r_j)\leq \int_{r_j}^{\delta_M(x)} G'(t)\,dt\leq \int_{r_j}^{1} G'(t)\,dt \leq \ln\big(1/r_j\big)\,.
\end{equation}
Hence for $\delta_M(x)>r_j$
\begin{equation}\label{E:3.2.18}
e^{2g(x)}\leq \frac{1}{r_j^2}\cdot e^{2G(r_j)}\,,
\end{equation}
while for $\delta_M(x)\leq r_j$,
\begin{equation}\label{E:3.2.19}
e^{2g(x)}\leq e^{2G(r_j)}\,.
\end{equation}
From \eqref{E:BL} the conditions \eqref{E:h} and \eqref{E:3.1.4} in Lemma \ref{L:B} hold true for  $E_0=E+1$ and $B(x)=|\nM g|_{_M}^2$.

We now use all of the above to estimate the right-hand side of \eqref{E:3.1.5}. We take 
\begin{equation}\label{E:3.2.20}
r_j=\frac{r_1}{2^{j-1}}\,,\quad r_1<a_0\,\quad\text{and}\quad R_j =\frac{1}{r_j}
\end{equation}
in which case:
\begin{equation}\label{E:3.2.21}
r_j-r_{j+1}=r_{j+1}\quad\text{and}\quad R_{j+1}-R_j=R_j\,.
\end{equation}
Therefore, for $x\in  A_j$ (see \eqref{E:3.1.7} and \eqref{E:nablag}),
\begin{equation}\label{E:3.2.22}
\begin{aligned}
|m_j(x)| 
&\leq e^{2g(x)}\bigg[\frac{2}{\delta_M(x)}\Big(\big|\nM \mathcal K_j(x)\big|_{_M} \ell_j(d_M(x))
 +k_j(\delta_M(x))\big|\nM\mathcal L_j(x)\big|_{_M}\Big)\\
&\qquad\qquad +\Big(\big|\nM \mathcal K_j(x)\big|_{_M} \ell_j(d_M(x))
 +k_j(\delta_M(x))\big|\nM\mathcal L_j(x)\big|_{_M}\Big)^2\bigg] 
\end{aligned}
\end{equation}

From its definition, we can write $A_j=A_j^1\cup A_j^2$ with
\begin{equation}
A_j^1=\big\{x\in\Omega\,\big|\, r_{j+1}<\delta_M(x)<r_j,\,\,\, d_M(x)<R_{j+1}\big\}
\end{equation}
and
\begin{equation}
A_j^2=\big\{x\in\Omega\,\big|\, \delta_M(x)\geq r_j,\,\,\, R_j<d_M(x)<R_{j+1}\big\}\,.
\end{equation}
Let $x\in A_j^1$, and use \eqref{E:3.2.12}, \eqref{E:3.2.13},\eqref{E:3.2.19}, \eqref{E:3.2.20}, \eqref{E:3.2.21}, as well as 
the fact that $r_1<a_0\leq1$ to obtain that:
\begin{equation}\label{E:3.2.23}
\big|m_j(x)\big| \leq 4e^{2G(r_j)}\left[\frac{1}{r_{j+1}}\left(\frac{1}{r_{j+1}}+r_j\right)+\left(\frac{1}{r_{j+1}}+r_j\right)^2\right]
 \leq\text{const. } \frac{e^{2G(r_j)}}{r_j^2}\,.
\end{equation}
For $x\in A^2_j$, recall that $k_j'(\delta_M(x))=0$, so using \eqref{E:3.2.13}, \eqref{E:3.2.18}, \eqref{E:3.2.20},
and \eqref{E:3.2.21} we obtain that
\begin{equation}\label{E:3.2.19'}
\big|m_j(x)\big|\leq \text{const. } \frac{e^{2G(r_j)}}{r_j^2}\left[\frac{2}{r_j}r_j+r_j^2\right]
\leq \text{const. } \frac{e^{2G(r_j)}}{r_j^2}\,.
\end{equation}
Summarizing, we have shown that for $x\in A_j$ we have
\begin{equation}
\big|m_j(x)\big|\leq \text{const. } \frac{e^{2G(r_j)}}{r_j^2}\,,
\end{equation}
and hence
\begin{equation}\label{E:3.2.20'}
\dlab\psi_E,|m_j|\psi_E\drab \leq \text{const. } \frac{e^{2G(r_j)}}{r_j^2}\int_{A_j} \big|\psi_E(x)\big|^2 \rho_\infty(x)\,dx\,.
\end{equation}

Let now $K$ be a compact set, $K\subset\Omega$. Since $\Omega_{r,1/r}$ exhaust $\Omega$
as $r\to0$, there exists $r_0\in(0,2a_0)$ such that $K\subset \Omega_{r_0,1/r_0}$. Take then $r_1=r_0/2$.
Since $\Omega_{r_0,1/r_0}\subset \Omega_{r_1,1/r_1}$, we know that $\phi_j(x)=1$ for all
$x\in K$ and all $j\geq 1$. Furthermore, since $G'\geq 0$, we have
\begin{equation}
\inf_{x\in K} e^{2g(x)}\geq e^{2G(r_0)}\,,
\end{equation}
which then gives:
\begin{equation}\label{E:3.2.21'}
\dlab\psi_E,f_j^2\psi_E\drab\geq \int_K\big|\psi_E(x)\big|^2 f_j(x)^2\rho_\infty(x)\,dx 
\geq e^{2G(r_0)}\int_K \big|\psi_E(x)\big|^2 \rho_\infty(x)\,dx\,.
\end{equation}
Together with \eqref{E:3.1.5} and \eqref{E:3.2.20'}, this leads to 
\begin{equation}\label{E:3.2.22'}
\text{const. }|E|e^{2G(r_0)} r_j^2 e^{-2G(r_j)} \int_K\big|\psi_E(x)\big|^2 \rho_\infty(x)\,dx \leq 
\int_{\bar A_j} \big|\psi_E(x)\big|^2 \rho_\infty(x)\,dx\,.
\end{equation} 
Summing over $j\geq 1$ and recalling that $r_j=r_0 2^{-j}$ yields:
\begin{equation}\label{E:3.2.23'}
\text{const.}(|E|,r_0)\left(\sum_{j=1}^\infty 4^{-j} e^{-2G(2^{-j}r_0)}\right) 
\int_K \big|\psi_E(x)\big|^2 \rho_\infty(x)\,dx\leq \|\psi_E\|_{L^2_{\rho_\infty}}^2\,.
\end{equation}
Now, for any integer $N$, condition ($\Sigma_2$) implies that
\begin{equation}
\sum_{j=1}^\infty 4^{-j} e^{-2G(a_0 2^{-N}2^{-j})}=\infty\,,
\end{equation}
so if we take $r_0=a_0 2^{-N}$ with $N$ sufficiently large to ensure $K\subset \Omega_{r_0,1/r_0}$,
we obtain
\begin{equation}
\sum_{j=1}^\infty 4^{-j} e^{-2G(2^{-j}r_0)}=\infty\,.
\end{equation}
Combining this with \eqref{E:3.2.23'} yields
\begin{equation}
\int_K \big|\psi_E(x)\big|^2 \rho_\infty(x)\,dx=0\,,
\end{equation}
which in turn, given that $K$ is an arbitrary compact subset of $\Omega$ and $\rho_\infty>0$ on $\Omega$, implies that 
$\psi_E=0$. This completes the proof in the case (C3).

The proof in the cases (C1) and (C2) is similar to the one above. Note that
$\mathcal K_j$ cannot be defined in the case (C1), and that in the 
case (C2) and for $j$ sufficiently large,
we have $\mathcal L_j\equiv 1$ (and thus have no cut-off effect). 
It is therefore natural to make the following (simpler) choices for $f$:\\
In the case (C1), take $f_j(x)=\mathcal L_j(x)$ and $\Omega_{R_j}=\{x\in\Omega\,|\,d_M(x)<R_j\}$. Recall that in 
this case we do not (necessarily) have hypothesis (ii), and neither a function $g$ as used above (since it is not needed in 
this case, and we apply Lemma~\ref{L:B} with $g\equiv 0$). Note however that
the $\Omega_{R_j}$'s exhaust $\Omega$ as $j\to\infty$.\\
In the case (C2) (in which hypothesis (ii) must again hold), take $f_j(x)=e^{g(x)}\mathcal K_j(x)$ and 
$\Omega_{r_j}=\{x\in\Omega\,|\,\delta_M(x)>r_j\}$. 
Again, the $\Omega_{r_j}$'s exhaust $\Omega$ as $j\to\infty$.

In both of these cases, the proof proceeds exactly as above, with many of the intermediate inequalities becoming
somewhat simplified.
\end{proof}

In the case of stochastic confinement, the balance between the effects of the geometry of $\Omega$ and those
of the confinement can be achieved in several ways. We show several possible results, as explained in
the introduction.

\begin{theorem}\label{T:2}
Assume that $\rho_\infty(x)=\omega(x)\tilde\rho_\infty(x)$ with $\tilde\rho_\infty\in L^1(\Omega,\rho_\infty\, dx)$.
Let $(r_k)_{k\geq1}$ be a decreasing sequence going to 0, and $R_{k+1}:=R_k+1$.

Then $H_0$
is stochastically complete in all the following cases:
\begin{itemize}
\item[(i)] $M$ is geodesically complete and there exists $\alpha<\infty$ such that 
\begin{equation}\label{E:2.2.4}
\omega_{k,\infty}:=\sup_{R_k\leq d_M(x)\leq R_{k+1}} \omega(x)\leq e^{\alpha R_k}\,.
\end{equation}
\item[(ii)] $M$ is geodesically incomplete,  Assumption \textbf{(A)} holds true,    $\text{diam}(M)<\infty$, and
\begin{equation}\label{E:2.2.6}
\sum_{k=1}^\infty \frac{(r_k-r_{k+1})^2}{\omega_k} =\infty\,\quad
\text{where}\quad
\omega_k:= \sup_{r_{k+1}\leq \delta_M(x)\leq r_{k}} \omega(x)\,.
\end{equation}
\item[(iii)] $M$ is geodesically incomplete, Assumption \textbf{(A)} holds true, $\text{diam}(M)=\infty$, and both conditions \eqref{E:2.2.4}, \eqref{E:2.2.6} hold.
\end{itemize}
\end{theorem}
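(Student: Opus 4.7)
The plan is to invoke the Liouville property of Theorem \ref{SC} and reduce stochastic completeness to showing that for some $E<0$ every bounded classical solution $\psi_E$ of $(H_0-E)\psi_E=0$ must vanish identically. In each of the three cases I will apply the Basic Inequality, Lemma \ref{L:B}, to cutoffs of the same type as those used in the proof of Theorem \ref{T:ess_sa_1}. The crucial advantage relative to the essential self-adjointness setting is that $\psi_E$ is now bounded rather than $L^2_{\rho_\infty}$, so on the right-hand side of \eqref{E:3.1.5} one may replace $|\psi_E|^2$ by the constant $\|\psi_E\|_\infty^2$ and reduce matters to controlling $\int|m_j|\rho_\infty\,dx$ over the shell where $\nM\phi_j\neq 0$. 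The decomposition $\rho_\infty=\omega\tilde\rho_\infty$ then serves to balance the pointwise growth of $\rho_\infty$ on that shell, which is absorbed by $\omega$, against the finite total budget furnished by the integrability of $\tilde\rho_\infty$.

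For case (i), $M$ is complete and no $\delta_M$-cutoff is needed; I would take $\phi=\mathcal L_j$ together with $g(x)=-c\,d_M(x)$ for some $c$ with $2c>\alpha$. Since $|\nM g|_M\leq c$ and $H_0\geq 0$, the hypotheses \eqref{E:h}--\eqref{E:3.1.4} of Lemma \ref{L:B} hold with $B\equiv c^2$, $E_0=-c^2$ and any $E<-c^2$. On the shell $R_j\leq d_M\leq R_{j+1}=R_j+1$ the combined bounds $e^{2g}\leq e^{-2cR_j}$, $|\nM\mathcal L_j|_M\leq 2$ and \eqref{E:2.2.4} yield $|m_j(x)|\rho_\infty(x)\leq C\,e^{(\alpha-2c)R_j}\tilde\rho_\infty(x)$, so the right-hand side of \eqref{E:3.1.5} is dominated by $C'\|\psi_E\|_\infty^2\,e^{(\alpha-2c)R_j}\int_{\text{shell}_j}\tilde\rho_\infty\,dx\to 0$. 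For any compact $K\subset\Omega$ the left-hand side in turn dominates $c_K\int_K|\psi_E|^2\rho_\infty\,dx$ (with $c_K=e^{-2c\max_K d_M}>0$) for $j$ large, forcing $\psi_E\equiv 0$.

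For case (ii), $M$ is incomplete with $\diam(M)<\infty$, and I would instead take $\phi=\mathcal K_j$, $g\equiv 0$, $E_0=0$ and $B\equiv 0$; Lemma \ref{L:B} then applies for any $E<0$. Fixing a compact $K\subset\Omega$ so that $\mathcal K_j|_K\equiv 1$ for large $j$, and using $|\nM\mathcal K_j|_M\leq 2/(r_j-r_{j+1})$ together with $\omega\leq\omega_j$ on the shell, the Basic Inequality yields
\[
\frac{(r_j-r_{j+1})^2}{\omega_j}\int_K|\psi_E|^2\rho_\infty\,dx\leq C\,\|\psi_E\|_\infty^2\int_{r_{j+1}\leq\delta_M\leq r_j}\tilde\rho_\infty\,dx.
\]
Summing over $j$, the left-hand side picks up the divergent factor $\sum_k(r_k-r_{k+1})^2/\omega_k=\infty$ from \eqref{E:2.2.6} while, by disjointness of the shells, the right-hand side stays bounded by a fixed multiple of the total integral of $\tilde\rho_\infty$. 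Hence $\int_K|\psi_E|^2\rho_\infty dx=0$, and since $K$ is arbitrary, $\psi_E\equiv 0$.

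Case (iii) is the amalgamation: take $\phi_j=\mathcal K_j\mathcal L_j$ and $g(x)=-c\,d_M(x)$ with $2c>\alpha$, exactly as in the case (C3) step of the proof of Theorem \ref{T:ess_sa_1}. The support of $\nM\phi_j$ then splits into an inner shell (where $\nM\mathcal K_j\neq 0$), handled by the argument of case (ii) via \eqref{E:2.2.6}, and an outer shell (where $\nM\mathcal L_j\neq 0$), handled by the argument of case (i) via \eqref{E:2.2.4}. The main obstacle I anticipate is bookkeeping the two shells simultaneously without letting the cross term $\phi(\nM g)\cdot_M(\nM\phi)$ in \eqref{E:3.1.7} cross-contaminate the two estimates; this is resolved by the fact that $\mathcal K_j$ and $\mathcal L_j$ have disjoint transition regions, so on each shell only one of the two factors of $\phi_j$ actually varies and the bound reduces to a single-cutoff estimate of the type already analyzed.
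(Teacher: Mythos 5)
Your proposal is correct and follows essentially the same route as the paper's proof: the Liouville property combined with the Basic Inequality, the cutoffs $\mathcal K_j\mathcal L_j$ from Theorem~\ref{T:ess_sa_1} with $g$ a negative multiple of $d_M$, the splitting of the annulus into $A_j^1$ and $A_j^2$, the use of boundedness of $\psi_E$ to reduce the right-hand side to $\int \tilde m_j\,\tilde\rho_\infty\,dx$, and the summation over $j$ against the $L^1$ budget of $\tilde\rho_\infty$. The only deviations are cosmetic: you take $2c>\alpha$ where the paper takes $g=-\tfrac{\alpha}{2}d_M$ (so your outer-shell term decays rather than merely stays bounded), and your claim that the transition regions of $\mathcal K_j$ and $\mathcal L_j$ are disjoint is not literally true, but the corner region where both factors vary is absorbed into the $A_j^1$ estimate exactly as in \eqref{E:3.3.6}, so nothing breaks.
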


\begin{proof}
Again we give in detail only the proof for case (C3), which now corresponds exactly with hypothesis (iii) holding. 
We will keep the same choices for $\mathcal K_j$, $\mathcal L_j$
and $\Omega_{r_j,R_j}$ as in the proof of Theorem~\ref{T:ess_sa_1}, but instead of \eqref{E:3.2.20} we take:
\begin{equation}\label{E:3.3.1}
r_j\to 0\quad\text{and}\quad R_{j+1}=R_j+1\,,
\end{equation}
and we choose
\begin{equation}\label{E:3.3.2}
g(x)=-\frac{\alpha}{2} d_M(x)\quad\Longrightarrow\quad \big|\nM g(x)\big|_{_M}\leq \frac{\alpha}{2}\,.
\end{equation}

From \eqref{E:3.1.7}, we obtain in this case  :
\begin{equation}\label{E:3.3.4}
|m_j(x)|\leq e^{-\alpha d_M(x)} \mathcal A_j(x)\cdot\big[\alpha+\mathcal A_j(x)\big]\,,
\end{equation}
with
\begin{equation}\label{E:3.3.5}
\mathcal A_j(x)=\big|\nM \mathcal K_j(x)\big|_{_M} \ell_j(d_M(x))
 +k_j(\delta_M(x))\big|\nM\mathcal L_j(x)\big|_{_M}.
\end{equation}
For $x\in A_j^1$, we find that (see \eqref{E:3.2.12}, \eqref{E:3.2.13} and \eqref{E:3.3.1})\begin{equation}\label{E:3.3.6}
\mathcal A_j(x)\leq 2\,\left(1+\frac{1}{r_j-r_{j+1}}\right)\,,
\end{equation}
while for $x\in A_j^2$
\begin{equation}\label{E:3.3.7}
\mathcal A_j(x)\leq 2\,.
\end{equation}
Using the fact that $r_j-r_{j+1}<r_1\leq 1$, we then conclude that
\begin{equation}\label{E:3.3.8,9}
\big|m_j(x)\big|\leq
\begin{cases}
\text{const.}(\alpha)\,\frac{1}{(r_j-r_{j+1})^2} &\quad \text{for } x\in A_j^1\,;\\
 & \\
2(\alpha+2)\,e^{-\alpha R_j} &\quad\text{for } x\in A_j^2\,.
\end{cases}
\end{equation}

Now let 
\begin{equation}\label{E:3.3.10}
\tilde m_j(x)=m_j(x)\cdot \omega(x)\,,
\end{equation}
and note that we find
\begin{equation}\label{E:3.3.13}
\big|\tilde m_j(x)\big|\leq  \text{const.}(\alpha)\cdot\left(1+\frac{\omega_j}{(r_j-r_{j+1})^2}\right)
\end{equation}
Then take in Lemma~\ref{L:B} 
\begin{equation}\label{E:3.3.14}
-E=\frac{\alpha^2}{2}\,.
\end{equation}
Since $H_0\geq 0$, \eqref{E:3.1.4} holds true, so that from \eqref{E:3.3.10} and \eqref{E:3.3.13} we conclude:
\begin{equation}\label{E:3.3.15}
\dlab \psi_E,|m_j|\psi_E\drab \leq \text{const.}(\alpha) \left(1+\frac{\omega_j}{(r_j-r_{j+1})^2}\right) \,
\int_{\bar A_j} \tilde\rho_\infty(x)\,dx\,.
\end{equation}

Let $K$ be a compact set, $K\subset\Omega$, and let $C(\alpha,K)=\inf_{x\in K} e^{-\alpha d_M(x)}>0$. Then,
as in the proof of Theorem~\ref{T:ess_sa_1}, one can find $N\geq 1$ an integer such that $K\subset \Omega_{r_N,R_N}$,
which then implies that
\begin{equation}\label{E:3.48}
\left(\sum_{j=N}^\infty \frac{1}{1+\frac{\omega_j}{(r_j-r_{j+1})^2}}\right) \int_K \big|\psi_E(x)\big|^2\rho_\infty(x)\,dx
\leq C(\alpha,K) \|\tilde\rho_\infty\|_{L^1}<\infty\,.
\end{equation}

But from \eqref{E:2.2.6} we know that 
\begin{equation*}
\sum_{j=N}^\infty\Big( \frac{\omega_j}{(r_j-r_{j+1})^2}\Big)^{-1}=\infty\,, 
\end{equation*}
which yields
\begin{equation*}
\sum_{j=N}^\infty \frac{1}{1+\frac{\omega_j}{(r_j-r_{j+1})^2}}=\infty\,.
\end{equation*}
Indeed, if we consider a sequence of positive numbers 
$(a_j)_{j\geq 1}\subset (0,\infty)$, then:
\begin{equation*}
\sum_{j=1}^\infty \frac{1}{a_j}=\infty\quad\Longrightarrow\quad \sum_{j=1}^\infty \frac{1}{1+a_j}=\infty\,.
\end{equation*}
This implication can be proven by considering $J=\{j\geq 1\,|\, a_j<1\}$. If $J$ is infinite, then note that
\begin{equation*}
\frac{1}{1+a_j}>\frac12\quad\text{for } j\in J\,,
\end{equation*}
and the conclusion follows. If $J$ is finite, then there exists $j_0\geq 1$ such that $a_j\geq 1$ for all $j\geq j_0$.
In this case, 
\begin{equation*}
\sum_{j=1}^\infty \frac{1}{1+a_j}\geq \sum_{j=j_0}^\infty \frac{1}{1+a_j}\geq \frac12\sum_{j=j_0}^\infty \frac{1}{a_j}=\infty\,,
\end{equation*}
and again the conclusion follows.

We have thus shown that $\int_K \big|\psi_E(x)\big|^2\,dx=0$, and the proof of Theorem~\ref{T:2}(iii) 
is completed.
\end{proof}

\section{Criteria in terms of the Euclidean metric}\label{S:5}

We turn now to our second type of results. The setting is as follows: Let $\Omega$ be bounded and its boundary consist 
of a finite number of well separated parts, as described in the hypotheses of the next theorem:
\begin{theorem}\label{T:3}
Let $\Omega$ be bounded, and $\partial\Omega=\bigcup_{j=1}^p \Gamma_j$ with $p<\infty$ and 
\begin{equation}\label{gammajgammak}
d(\Gamma_j,\Gamma_k)\geq d_0>0\quad\text{for all}\,\,\, j\neq k\,.
\end{equation}
For  $\nu>0$, let
\begin{equation}\label{deltaj}
\Gamma_{j,\nu}=\big\{x\in\Omega\,\big|\, \delta_j(x):=d(x,\Gamma_j)\leq\nu\big\}\quad\text{and}\quad v_j(\nu)=\text{vol} (\Gamma_{j,\nu})\,.
\end{equation}
Assume that
\begin{equation}\label{Da}
\mathbb D(x)\leq a(x)\mathds 1
\end{equation}
and that there exists $0<\nu_0  \leq  \min\{1, \frac{d_0}{4}\}$, $\varepsilon>0$, $\eta_j<1$ and $K<\infty$ such that for each $j=1,2,...p$, on $\Gamma_{j,\nu_0}$ either

i. 
\begin{equation}\label{ai}
a(x) \leq K\delta_j(x)^{\beta_j}, \quad \beta_j \geq 2
\end{equation}
and there exists $L<\infty$ such that
\begin{equation}\label{roi}
\rho_{\infty}(x) \leq
\begin{cases}
e^{L\delta_j(x)^{1-\frac{\beta_j}{2}}} &\quad \text{if } \quad  \beta_j >2\,;\\
 & \\
\delta_j(x)^{-L} &\quad \text{if} \quad  \beta =2\,,
\end{cases}
\end{equation}

or

ii. For $0<\nu\leq\nu_0$  $v_j$ is continuous,  $v_j(\nu)\leq K\nu^{1-\eta_j}$
and for $x\in\Gamma_{j,\nu_0}$ 
\begin{equation}\label{E:product_j}
a(x)\rho_\infty(x)\leq K\delta_j(x)^{1+\eta_j}\left(\ln \frac{1}{\delta_j(x)}\right)^{1-\varepsilon}\,.
\end{equation}
Then $H_0$
is stochastically complete.
\end{theorem}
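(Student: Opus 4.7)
The plan is to apply the Liouville criterion (Theorem~\ref{SC}): fixing some $E<0$ with $|E|$ sufficiently large, we take any bounded classical solution $\psi_E$ of $(H_0-E)\psi_E=0$ and aim to show $\psi_E\equiv 0$. Since $V=0$ and $H_0\geq 0$, the hypotheses of the Basic Inequality Lemma~\ref{L:B} are satisfied with $E_0=0$ and $B\equiv 0$, reducing the weight requirement to $|\nM g|_{_M}^2\leq |E|/2$. Because the boundary pieces $\Gamma_j$ are separated by $d_0$ in \eqref{gammajgammak}, we may design $g$ and the cut-off $\phi_n$ as sums and products of pieces supported in the disjoint neighborhoods $\Gamma_{j,\nu_0}$, handling each $\Gamma_j$ independently.

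For each $\Gamma_j$ in case (i), set
\begin{equation*}
g_j(x)=\begin{cases} -c\,\delta_j(x)^{1-\beta_j/2}, & \beta_j>2,\\ c\ln\delta_j(x), & \beta_j=2,\end{cases}
\end{equation*}
with $c$ chosen so that $2c>L$ in \eqref{roi}, and truncated near $\Gamma_j$ to be Lipschitz on $\Omega$. The bounds \eqref{Da}--\eqref{ai} give $|\nM g_j|_{_M}^2\leq a|\nabla g_j|^2\leq Kc^2(\beta_j/2-1)^2$, bounded by a constant, so $|E|/2$ dominates for $|E|$ large. The decisive point is that on $\Gamma_{j,\nu_0}$ the product $e^{2g_j}\rho_\infty$ decays: super-exponentially as $e^{(L-2c)\delta_j^{1-\beta_j/2}}$ when $\beta_j>2$, or polynomially as $\delta_j^{2c-L}$ when $\beta_j=2$. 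A simple one-scale cut-off $\phi_{j,n}$ equal to $1$ on $\{\delta_j\geq 2r_n\}$ and $0$ on $\{\delta_j\leq r_n\}$, with $|\nabla\phi_{j,n}|\leq 1/r_n$, then yields $\int e^{2g_j}|\nM\phi_{j,n}|_{_M}^2\rho_\infty\,dx\to 0$ as $r_n\to 0$, using only $\vol(\Omega)<\infty$ to control the transition shell.

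For each $\Gamma_j$ in case (ii), set $g_j\equiv 0$ and use a multi-scale logarithmic cut-off
\begin{equation*}
\phi_{j,n}(x)=\Phi_n(\delta_j(x)),\qquad \Phi_n(\delta)=\frac{\ln(\delta/r_n)}{\ln(\nu_0/(2r_n))}\ \text{ on }\ [r_n,\nu_0/2],
\end{equation*}
extended by $0$ below $r_n$ and by $1$ above $\nu_0/2$. Then $|\nabla\phi_{j,n}(x)|\leq(\delta_j(x)\ln(\nu_0/(2r_n)))^{-1}$ on the transition annulus. A Lebesgue--Stieltjes integration by parts against $dv_j$, using the volume bound $v_j(\nu)\leq K\nu^{1-\eta_j}$ and the product bound \eqref{E:product_j}, yields
\begin{equation*}
\int a|\nabla\phi_{j,n}|^2\rho_\infty\,dx\leq \frac{C}{(\ln(1/r_n))^2}\int_{r_n}^{\nu_0/2}s^{-1+\eta_j}(\ln 1/s)^{1-\varepsilon}\,dv_j(s)\leq C'(\ln(1/r_n))^{-\varepsilon}\to 0.
\end{equation*}
The logarithmic scaling is the point: the extra $(\ln(1/r_n))^{-2}$ from the gradient precisely absorbs the $(\ln 1/s)^{2-\varepsilon}$ growth that the Stieltjes integration by parts produces from $v_j$.

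Assembling via $\phi_n=\prod_j\phi_{j,n}$ and $g=\sum_j g_j$, the disjointness of the supports of $\nabla\phi_{j,n}$ across different $j$ splits $|m_n|$ cleanly into the boundary-piece contributions estimated above. Since $\psi_E$ is bounded, Lemma~\ref{L:B} gives
\begin{equation*}
\dlab\psi_E,f_n^2\psi_E\drab\leq \frac{2\|\psi_E\|_\infty^2}{|E|}\int|m_n|\rho_\infty\,dx\to 0\quad\text{as }n\to\infty.
\end{equation*}
For any compact $K\subset\Omega$ at positive distance from $\partial\Omega$, eventually $\phi_n\equiv 1$ on $K$ and $e^{2g}$ is bounded below there, so the left side is bounded below by $c_K\int_K|\psi_E|^2\rho_\infty\,dx$. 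This forces $\psi_E=0$ on each such $K$, and exhausting $\Omega$ gives $\psi_E\equiv 0$. The main technical hurdle is the case (ii) estimate: the naive one-scale cut-off produces a divergent $\int a|\nabla\phi|^2\rho_\infty\sim(\ln 1/r)^{1-\varepsilon}$, and only the multi-scale logarithmic cut-off, paired with Lebesgue--Stieltjes integration by parts against $dv_j$, extracts the decisive $(\ln 1/r)^{-\varepsilon}$ decay.
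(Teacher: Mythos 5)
Your proposal is correct and follows essentially the same route as the paper: the Liouville criterion fed into the Basic Inequality Lemma, with an exponential weight $e^{2g_j}$ built from $\delta_j^{1-\beta_j/2}$ (resp.\ $\ln\delta_j$) absorbing the growth of $\rho_\infty$ on the pieces satisfying (i), and a logarithmic cut-off combined with a Lebesgue--Stieltjes integration by parts against $dv_j$ producing the decisive $(\ln(1/r_n))^{-\varepsilon}$ decay on the pieces satisfying (ii). The only (cosmetic) slip is that your case-(ii) cut-off is never identically $1$ on a compact $K$ meeting $\{\delta<\nu_0/2\}$; however it converges to $1$ uniformly on $K$, so the lower bound on $\dlab\psi_E,f_n^2\psi_E\drab$ still holds and the conclusion is unaffected.
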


\begin{remark}
Note that if we assume that the $\Gamma_j$ are $ C^2$-smooth, then the $\epsilon$-neighborhood Theorem
implies that the assumptions above are satisfied since the functions $v_j$ are $C^1$-smooth and 
$v_j(\nu)\sim C_j \nu^{\kappa_j}$ where $\kappa_j$ is the co-dimension of $\Gamma_j$. For more details,
see for example Theorem 2.2 in \cite{H}.
\end{remark}

\begin{proof}

The proof is similar to previous ones, but with a different choice for $f$ in the Basic Inequality Lemma \ref{L:B}. 
Let 
$\Gamma_{\nu}=\{x\in\Omega\,|\,\delta(x)\leq\nu\}$,
and  (recall that $\nu \leq \nu_0 \leq 1$) $B_{\nu}=\{x\in\Omega\,|\, \nu^2\leq\delta(x)\leq\nu\}$. Notice that $B_\nu\subset\Gamma_\nu$.
Since $\nu \leq \nu_0 \leq \frac{d_0}{4}$, from \eqref{gammajgammak} we find that for $j \neq k$
\begin{equation}\label{E:3.4.1}
d (\Gamma_{j,\nu},\Gamma_{k,\nu})\geq\frac{d_0}{2}\,,
\end{equation}
and on $\Gamma_{j,\nu_0}$ (see  \eqref{deltaj} for the definition of $\delta_j(x)$)
\begin{equation}\label{E:3.4.2}
\delta(x)=\delta_j(x)\,.
\end{equation}
For each $j\geq 1$, let
\begin{equation}\label{E:3.4.3}
B_{j,\nu}=\{x\in\Omega\,|\, \nu^2\leq\delta_j(x)\leq\nu\}\,,
\end{equation}
and note that from \eqref{E:3.4.1} we see 
\begin{equation}\label{E:3.4.4}
d(B_{j,\nu},B_{k,\nu})\geq\frac{d_0}{2}\,.
\end{equation}

We now choose the function $f_{\nu}=e^g\phi_{\nu}$ to be used in \eqref{E:3.1.5}. We begin with $\phi_{\nu}$.
 For $0<\nu \leq \nu_0$, 
\begin{equation}\label{E:3.4.6}
\phi_\nu(x)=\varphi_\nu(\delta(x))\,,
\end{equation}
with
\begin{equation}\label{E:3.4.7}
\varphi_\nu(t)=
\begin{cases}
0 & \quad\text{for } t\in(0,\nu^2) \\
1-\frac{\ln \frac{\nu}{t}}{\ln \frac{1}{\nu}} &\quad\text{for } t\in[\nu^2,\nu] \\
1 & \quad\text{for } t>\nu
\end{cases}
\end{equation}
Note that with this choice we have that for all $x$ such that $\delta(x) \geq \nu$
\begin{equation}\label{E:3.4.8}
\phi_\nu(x)=1.
\end{equation}
A direct calculation leads to ($t \neq \nu, \nu^2$)
\begin{equation}\label{E:3.4.9}
\varphi'_\nu(t)=
\begin{cases}
\frac{1}{t\ln \frac{1}{\nu}} & \quad\text{for } t\in (\nu^2,\nu) \\
0 & \quad\text{otherwise}
\end{cases}
\end{equation}
which in particular implies that
\begin{equation}\label{E:3.4.10}
\supp \phi_\nu\subset B_\nu\,.
\end{equation}
We turn now on the choice of $g(x)$. Let $J \subset \{1,2,...,p\}$ the set of $j$'s for which  alternative i. of the theorem 
holds true, $\mathcal G \in \mathcal C^1(0,\infty)$ ,   $|\mathcal G'(t)| \leq  \frac{4}{\nu_0}$  and 
\begin{equation}\label{g}
\mathcal G(t)=
\begin{cases}
1 &\quad \text{if } \quad t\in (0,
\frac{\nu_0}{2})\,;\\
 & \\
0 &\quad \text{if} \quad  t\geq \nu_0\,.
\end{cases}
\end{equation}
Fix $L<\infty$ and take
\begin{equation}
g(x)= -\sum_{j \in J}g_j(x)
\end{equation}
where
\begin{equation}\label{gj}
g_j(x)=
\begin{cases}
\mathcal G(L\delta_j(x)^{(1-
\frac{\beta_j}{2})}) &\quad \text{if } \quad \beta_j >2\,;\\
 & \\
\mathcal G(L \ln \frac{1}{\delta_j(x)})&\quad \text{if} \quad  \beta_j =2\,.
\end{cases}
\end{equation}
Notice that $g_j$ have disjoint supports. From the definition of $g_j$ and \eqref{ai}
\begin{equation}\label{gradMgj}
\big|\nM g_j(x)\big|_{_M}^2 \leq \text{const}(\nu_0, L, \beta_j)
\end{equation}
hence choosing $|E|$ sufficiently large in the Basic Inequality Lemma and using $H_0 \geq 0$, condition \eqref{E:3.1.4} is satisfied.

From \eqref{E:3.4.1} (in what follows $m_{j, \nu}$ equals  $|m_{\nu}|$ for $x \in B_{j,\nu}$ and vanishes otherwise)
\begin{equation}\label{E:3.4.12}
\dlab\psi_E,|m_{\nu}|\psi_E\drab \leq \|\psi_E\|_\infty^2 \sum_{j=1}^p 
\int_{B_{j,\nu}} m_{j,\nu}(x)\rho_\infty(x)\,dx\,.
\end{equation}
and  we are left with estimating $m_{j,\nu}(x)$.  Consider first the case $j \notin J$. Since by construction $g(x)=0$ on 
$B_{j,\nu}$, from \eqref{E:3.1.7}, \eqref{E:3.4.6} and \eqref{E:3.4.9}:
\begin{equation}\label{E:3.4.11}
\begin{aligned}
m_\nu(x) 
&= \big|\nM\phi_\nu(x)\big|_{_M}^2=\sum_{j,k} \mathbb D_{j,k}(x) \frac{\partial \phi_\nu}{\partial x_j}(x)\frac{\partial \phi_\nu}{\partial x_k}(x)\\
&\leq a(x)\big|\nabla \phi_\nu(x)\big|^2=a(x)\big|\varphi'_\nu(\delta(x))\big|^2 \big|\nabla\delta(x)\big|^2\\
&\leq \frac{a(x)}{\big(\ln\frac{1}{\nu}\big)^2\delta_j(x)^2}\,
\end{aligned}
\end{equation}
so that
\begin{equation}
\dlab\psi_E,|m_{j,\nu}|\psi_E\drab \leq \frac{\|\psi_E\|_\infty^2}{(\ln\frac{1}{\nu})^2}
\int_{B_{j,\nu}} \frac{a(x)\rho_\infty(x)}{\delta_j(x)^2}\,dx\,.
\end{equation}
From \eqref{E:product_j}, an integration by parts implies that
\begin{equation}\label{E:3.4.13}
\begin{aligned}
\int_{B_{j,\nu}} \frac{a(x)\rho_\infty(x)}{\delta_j(x)^2}\,dx
&\leq K \left(\ln \frac{1}{\nu^2}\right)^{1-\varepsilon}\int_{\nu^2}^\nu t^{-1+\eta_j}\,dv_j(t) \\
& =K \left(\ln \frac{1}{\nu^2}\right)^{1-\varepsilon} \bigg[\nu^{-1+\eta_j} v_j(\nu)
-\nu^{2(-1+\eta_j)} v_j(\nu^2)\\
&\qquad\qquad\qquad\qquad+(1-\eta_j)\int_{\nu^2}^\nu t^{\eta_j-2}v_j(t)\,dt \bigg]
\end{aligned}
\end{equation}
From our assumptions on $v_j$ we conclude that
\begin{equation*}
\nu^{-1+\eta_j} v_j(\nu)\,, \nu^{2(-1+\eta_j)} v_j(\nu^2) \leq K\,,
\end{equation*}
while
\begin{equation*}
\int_{\nu^2}^\nu t^{\eta_j-2}v_j(t)\,dt \leq K \int_{\nu^2}^\nu t^{-1}\,dt= K\,\ln\frac{1}{\nu}\,.
\end{equation*}
Combining everything above yields
\begin{equation}\label{E:3.4.15}
\int_{B_{j,\nu}} \frac{a(x)\rho_\infty(x)}{\delta_j(x)^2}\,dx\leq 8 K^2 \left(\ln \frac{1}{\nu}\right)^{2-\varepsilon} \,,
\end{equation}
and hence
\begin{equation}\label{E:3.4.16}
\dlab\psi_E,m_{j,\nu}\psi_E\drab \leq 8K^2 \|\psi_E\|_\infty^2 \left(\ln \frac{1}{\nu}\right)^{-\varepsilon}\,.
\end{equation}

Consider now the case $j \in J$. In this case (use $\big|(\nM u)\cdot (\nM v)\big| \leq \big| \nM v \big|_{_M} \big| \nM u \big|_{_M}$)
\begin{equation}
m_{j,\nu}(x) \leq e^{2g_j(x)}(\big| \nM \phi_{\nu}(x)\big|_{_M}^2+2\big| \nM \phi_{\nu}(x)\big|_{_M} \big| \nM g_j(x)\big|_{_M}),
\end{equation}
so using \eqref{E:3.4.11}, \eqref{gradMgj}, \eqref{gj} and \eqref{roi} one obtains
\begin{equation}\label{mjrho}
m_{j,\nu}(x)\rho_{\infty}(x) \leq  \text{const}(\nu_0, L, \beta_j) e^{g_j(x)}.
\end{equation}
From \eqref{gj}, on $B_{j,\nu}$
\begin{equation}
e^{g_j(x)} \leq
\begin{cases}
e^{-L\nu^{1-\frac{\beta_j}{2}}} &\quad \text{if } \quad \beta_j >2\,;\\
 & \\
e^{-L\ln\frac{1}{\nu}}&\quad \text{if} \quad  \beta_j =2\,
\end{cases}
\end{equation}
which together with \eqref{mjrho} implies
\begin{equation}
\lim_{\nu \searrow 0}\int_{B_{j,\nu}} m_{j,\nu}\rho_\infty(x) dx=0,
\end{equation}
and hence
\begin{equation}\label{E:3.4.17}
\lim_{\nu \searrow 0}\dlab\psi_E,m_{j,\nu}\psi_E\drab =0\,.
\end{equation}
Summing up \eqref{E:3.4.16} and \eqref{E:3.4.17}, for all $j=1,2,...,p$:
\begin{equation}\label{E:3.4.18}
\lim_{\nu \searrow 0}\dlab\psi_E,m_{j,\nu}\psi_E\drab =0\,.
\end{equation}

Now let $K\subset\Omega$ be a compact set, $E<0$, and $\nu>0$ sufficiently small so that 
$K\subset \{x\in\Omega\,|\, \delta(x)>\nu\}$. Then, from \eqref{E:3.1.5} and \eqref{E:3.4.18},
we obtain
\begin{equation*}
\int_K |\psi_E(x)|^2\rho_\infty(x)\,dx=0\,,
\end{equation*}
thus concluding the proof.
\end{proof}

Finally, we close with another result, this time a concrete criterion for essential self-adjointness.

\begin{theorem}\label{T:4}
Let $\Omega$, $\Gamma_j$, $\Gamma_{j,\nu}$ be as in Theorem~\ref{T:3}. 
Assume that there exists $\nu_0 >0$ such that for each $j=1,...,p$, 
either\\
(i) there exist $ M<\infty $ such that (see  \eqref{deltaj} for the definition of $\delta_j(x)$):
\begin{equation}\label{E:2.2.14}
\mathbb D(x)\leq M\delta_j(x)^2 \mathds 1\qquad\text{for all }x\in\Gamma_{j,\nu_0}\,,
\end{equation}
or\\
(ii) $\Gamma_j$ is a $C^2$ submanifold of $\mathbb R^d$ of dimension $d_j$, $0\leq d_j\leq d-1$, and for all $x\in \Gamma_{j,\nu_0}$
\begin{equation}\label{E:2.2.15}
D_{j,-}\delta_j(x)^{\beta_j}\mathds 1 \leq \ID(x)\leq D_{j,+}\delta_j(x)^{\beta_j}\mathds 1, \quad
\rho_{j,-}\delta_j(x)^{\gamma_j} \leq  \rho_\infty(x)\leq \rho_{j,+}\delta_j(x)^{\gamma_j}
\end{equation}
with
\begin{equation}\label{E:2.2.16}
\beta_j<2,\,\,\, \gamma_j\in\IR,\quad\text{and}\quad
 \frac{D_{j,-}\rho_{j,-}}{D_{j,+}\rho_{j,+}}   \left(\frac{\beta_j+\gamma_j+d-d_j-2}{2-\beta_j}\right)^2\geq1\,.
\end{equation}
Then $H_0$ is essentially self-adjoint.
\end{theorem}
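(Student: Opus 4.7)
My plan is to apply Theorem~\ref{T:ess_sa_1} to $H=H_0$. Split $\{1,\dots,p\}=J_1\sqcup J_2$ according to which of alternatives (i), (ii) applies to $\Gamma_j$. For $j\in J_1$, the bound $\mathbb D\leq M\delta_j^2\mathds 1$ gives $\mathbb D^{-1}\geq M^{-1}\delta_j^{-2}\mathds 1$ on $\Gamma_{j,\nu_0}$; since $\int_0^{\nu_0}\delta^{-1}\,d\delta=\infty$, the Riemannian boundary distance $\delta_M(x)$ diverges as $x\to\Gamma_j$. Hence, if $J_2=\emptyset$, then $M$ is geodesically complete and Theorem~\ref{T:ess_sa_1}(i) yields the claim. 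Assume henceforth $J_2\neq\emptyset$; the goal is to verify the hypotheses of Theorem~\ref{T:ess_sa_1}(ii).

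For $j\in J_2$, the upper bound $\mathbb D\leq D_{j,+}\delta_j^{\beta_j}\mathds 1$ with $\beta_j<2$ yields, by integrating the Riemannian line element along a curve heading into $\Gamma_j$,
\begin{equation*}
\delta_M(x)\geq\frac{1}{\sqrt{D_{j,+}}\,(1-\beta_j/2)}\,\delta_j(x)^{1-\beta_j/2}\qquad\text{on }\Gamma_{j,\nu_0}.
\end{equation*}
This verifies Assumption~\textbf{(A)} (each $S_{r,R}$ is Euclidean-bounded since $\Omega$ is, and separated from every $\Gamma_j$ since $\delta_M(x)\geq r$ forces $\delta_j(x)\geq (r/c_j)^{2/(2-\beta_j)}$); and, upon choosing $G(t):=\min(\ln(t/a_0),0)$ (Lipschitz; $G'=1/t$ on $(0,a_0)$, $G\equiv 0$ on $[a_0,\infty)$; $e^{-2G(2^{-m}a_0)}=4^m$ so that $\Sigma_2$ holds), the function $g(x):=G(\delta_M(x))$ satisfies $|\nM g|_M\leq 1/\delta_M$ by~\eqref{E:2.1.4}, hence together with $\rho_\infty\leq\rho_{j,+}\delta_j^{\gamma_j}$
\begin{equation*}
|\nM g|_M^2\,\rho_\infty(x)\leq D_{j,+}\rho_{j,+}(1-\beta_j/2)^2\,\delta_j(x)^{\beta_j+\gamma_j-2}\qquad\text{on }\Gamma_{j,\nu_0},
\end{equation*}
while $g\equiv 0$ in a neighborhood of every $\Gamma_j$ with $j\in J_1$ (since $\delta_M\equiv\infty$ there).

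The decisive ingredient is the weighted Hardy inequality near each $\Gamma_j$ with $j\in J_2$. Using $C^2$ tubular coordinates $(y,v)\in\Gamma_j\times\mathbb R^{d-d_j}$ around $\Gamma_j$ in which $\delta_j(x)=|v|$, the classical radial weighted Hardy inequality on $\mathbb R^{d-d_j}$ applied fiberwise gives, for $\chi\in C_0^\infty(\Gamma_{j,\nu_0})$,
\begin{equation*}
\int|\nabla\chi|^2\delta_j^{\beta_j+\gamma_j}\,dx\geq\left(\frac{d-d_j+\beta_j+\gamma_j-2}{2}\right)^2\!\int|\chi|^2\delta_j^{\beta_j+\gamma_j-2}\,dx.
\end{equation*}
Combined with the lower bounds in~\eqref{E:2.2.15}, this produces
\begin{equation*}
\dlab\chi,H_0\chi\drab\geq D_{j,-}\rho_{j,-}\left(\frac{d-d_j+\beta_j+\gamma_j-2}{2}\right)^2\!\int|\chi|^2\delta_j^{\beta_j+\gamma_j-2}\,dx,
\end{equation*}
and hypothesis~\eqref{E:2.2.16} says exactly that the right side dominates $\int|\nM g|_M^2|\chi|^2\rho_\infty\,dx$.

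To globalize, take a smooth partition of unity $\{\chi_j^2\}_{j=0}^p$ with $\sum\chi_j^2=1$, $\chi_j$ supported in $\Gamma_{j,\nu_0/2}$ for $j\geq 1$, and $\chi_0$ supported away from $\partial\Omega$. The IMS-type identity
\begin{equation*}
\dlab\varphi,H_0\varphi\drab=\sum_{j=0}^p\dlab\chi_j\varphi,H_0(\chi_j\varphi)\drab-\int|\varphi|^2\sum_{j=0}^p|\nM\chi_j|_M^2\,\rho_\infty\,dx,
\end{equation*}
together with $|\varphi|^2=\sum_j|\chi_j\varphi|^2$, reduces \eqref{E:BL} to (i) the Hardy bound above on each $J_2$-piece, (ii) the non\-nega\-tivity of $H_0$ on each $J_1$-piece (where $|\nM g|_M\equiv 0$), and (iii) a uniform bound on $|\nM g|_M$ on $\mathrm{supp}(\chi_0)$. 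Choosing $|E|$ sufficiently large absorbs the uniformly bounded IMS localization error and produces the $\dlab\varphi,\varphi\drab$ on the right of \eqref{E:BL}, so that Theorem~\ref{T:ess_sa_1}(ii) applies. The main obstacle I anticipate is deriving the Hardy inequality with the \emph{sharp} constant uniformly on $\Gamma_{j,\nu_0}$ in the presence of the $C^2$ curvature of $\Gamma_j$: the tubular-coordinate Jacobian naively degrades the constant by a factor $1-O(\nu_0)$, and closing the gap to match \eqref{E:2.2.16} sharply will likely require either shrinking $\nu_0$ together with a Hardy-remainder improvement, or a Barta-type argument based on the explicit supersolution $\phi=\delta_j^{(2-\beta_j-\gamma_j-d+d_j)/2}$, which realizes the sharp Hardy constant directly through the identity $H_0\phi/\phi\sim((\beta_j+\gamma_j+d-d_j-2)/2)^2\,\delta_j^{\beta_j-2}$ up to genuinely lower-order terms.
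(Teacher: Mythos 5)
Your overall strategy is close in spirit to the paper's (a weighted Hardy barrier near each $\Gamma_j$ in alternative (ii), an Agmon-type weight $e^g$ saturating it, and an exhaustion by cut-offs), but there is a genuine gap at exactly the point you flag as ``the main obstacle'': the borderline case of equality in \eqref{E:2.2.16} is not handled, and it cannot be handled with your choice $G(t)=\min(\ln(t/a_0),0)$. Any correct form of the Hardy inequality near a curved $\Gamma_j$ carries a \emph{relative} error of order $\delta_j$: with the vector field $X=h_j\delta_j^{\beta_j+\gamma_j-1}\nabla\delta_j$ (equivalently, your supersolution $\delta_j^{(2-\beta_j-\gamma_j-d+d_j)/2}$) one obtains the barrier $\frac{D_{j,-}\rho_{j,-}}{4\rho_{j,+}}(\beta_j+\gamma_j+d-d_j-2)^2\,\delta_j^{\beta_j-2}\,(1-C\delta_j)$, the $O(\delta_j)$ loss coming from $\Delta\delta_j-\frac{d-d_j-1}{\delta_j}=O(1)$. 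Your $g$ gives $|\nM g|_{_M}^2\le D_{j,+}(1-\beta_j/2)^2\delta_j^{\beta_j-2}$ with no room to spare: when \eqref{E:2.2.16} holds with equality the leading constants coincide exactly, so $B(x)-|\nM g|_{_M}^2$ is bounded below only by $-\mathrm{const}\cdot\delta_j^{\beta_j-1}$, which is unbounded whenever $\beta_j<1$ and therefore cannot be absorbed into $|E|\dlab\varphi,\varphi\drab$ in \eqref{E:BL}. Shrinking $\nu_0$ does not help (the loss is multiplicative, not additive), and replacing $g$ by $(1-\epsilon)g$ restores the pointwise inequality but destroys condition ($\Sigma_2$), since then $\sum_m 4^{-m}e^{-2G(2^{-m}a_0)}\sim\sum_m 4^{-m\epsilon}<\infty$. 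So your argument as written proves the theorem only under the strict inequality $>1$ in \eqref{E:2.2.16}, or when all $\beta_j\ge1$.

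The missing idea --- and the reason the paper does not route the proof through Theorem~\ref{T:ess_sa_1}(ii) at all, but reruns Lemma~\ref{L:B} directly with weights built from the Euclidean distances $\delta_j$ --- is a sub-power correction to the exponent: $G_j(t)=\ln t^{(2-\beta_j)/2}+\tfrac12\ln\ln\tfrac1t$. This multiplies $(G_j')^2$ by $\bigl(1-\tfrac{1}{(2-\beta_j)\ln(1/t)}\bigr)^2$, and since $\tfrac{1}{\ln(1/t)}\gg t$ as $t\to0$ this beats the $(1-C\delta_j)$ degradation of the Hardy constant, so \eqref{E:3.1.4} holds even at equality in \eqref{E:2.2.16}. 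The price is that $e^{2G_j(t)}=t^{2-\beta_j}\ln\tfrac1t$, so with dyadic cut-offs $r_l=r_12^{1-l}$ the localization term grows like $l$ rather than staying bounded; but $\sum_l 1/l=\infty$, so the exhaustion still forces $\psi_E=0$. The rest of your outline (the treatment of the pieces satisfying (i), Assumption \textbf{(A)}, the IMS localization) is sound, but without this correction the sharp constant in \eqref{E:2.2.16} is out of reach.
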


Here we have to obtain first the "Hardy barrier", $B(x)$, appearing in Basic Inequality Lemma, and then to choose an appropriate $f$ there. A finite
number of positive constants $\nu_s >0$, $C_s<\infty $ will appear during the proof; if not otherwise stated, they depend only 
upon $H_0$ and the geometry of $\Omega$. 

Let $J$ be the set of $j$'s for which the alternative (ii) of the theorem holds true. 

\begin{lemma}[Hardy inequality]\label{L:H}
For  $\nu_1>0$ sufficiently small there exist $ 0\leq C_1, C_2<\infty $, $\nu_1 C_2  \leq 1$ , $\mathscr H(x) \geq 0$ such that for $\varphi \in C_0^{\infty}(\Omega)$
\begin{equation}
h_0[\varphi,\varphi]\geq  -C_1 \dlab\varphi, \varphi \drab + \int_{\Omega}\mathscr H(x)| \varphi(x) |^2 \rho_{\infty}(x)\,dx
\end{equation}
and for $x\in \Gamma_{j,\nu_1/2}$, $j\in J$:
\begin{equation}\label{E:H}
\mathscr H(x)=\frac{D_{j,-} \rho_{j,-}}{4\rho_{j,+}} (\beta_j +\gamma_j +d -d_j -2)^2\delta_j(x)^{\beta_j -2} (1-C_2 \delta_j(x))
\end{equation}
\end{lemma}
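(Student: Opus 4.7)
The plan is: (i) localize the form $h_0$ by an IMS-type partition of unity so that only the pieces of $\varphi$ supported near each $\Gamma_j$ with $j \in J$ need genuine analysis; (ii) on each such piece reduce to a one-dimensional weighted Hardy inequality along the normal-coordinate fibers of a tubular neighborhood of $\Gamma_j$; (iii) convert the resulting lower bound back to an integral against $\rho_\infty\,dx$ using the upper bounds in \eqref{E:2.2.15}.

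First I would pick $\nu_1 \in (0, \nu_0/4)$ and smooth cut-offs $\chi_j$, $j\in J$, with $\chi_j\equiv 1$ on $\Gamma_{j,\nu_1/2}$ and $\supp \chi_j \subset \Gamma_{j,\nu_1}$, together with $\chi_0$ supported in $\{x : \delta_j(x) \geq \nu_1/4 \text{ for all }j\in J\}$, satisfying $\chi_0^2 + \sum_{j\in J}\chi_j^2 = 1$; the supports of the $\chi_j$ are pairwise disjoint by \eqref{gammajgammak}. The standard IMS identity then gives
\begin{equation*}
h_0[\varphi,\varphi] = h_0[\chi_0\varphi,\chi_0\varphi] + \sum_{j\in J}h_0[\chi_j\varphi,\chi_j\varphi] - \int_\Omega \Big(|\nM\chi_0|_{_M}^2 + \sum_{j\in J}|\nM\chi_j|_{_M}^2\Big)|\varphi|^2\rho_\infty\,dx,
\end{equation*}
and the last integral is bounded in modulus by some $C_1\dlab\varphi,\varphi\drab$ because the gradients are supported in a compact subset of $\Omega$ where $\mathbb D$ and $\rho_\infty$ are bounded. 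Discarding the nonnegative $\chi_0$ contribution already produces the $-C_1\dlab\varphi,\varphi\drab$ term of the lemma.

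Next, fixing $j\in J$, I would exploit the $C^2$ hypothesis on $\Gamma_j$: for $\nu_1$ small, $\delta_j$ is $C^2$ on $\Gamma_{j,\nu_1}$ with $|\nabla\delta_j|=1$, and the normal parametrization $\Psi(y,r\omega)=y+r\omega$, $y \in \Gamma_j$, $\omega \in S^{d-d_j-1} \subset N_y\Gamma_j$, has volume element $dx = r^{d-d_j-1}(1+O(r))\,dr\,d\omega\,d\sigma_j(y)$ via the $\varepsilon$-tubular neighborhood theorem. Combining the lower bounds in \eqref{E:2.2.15} with $|\nabla(\chi_j\varphi)|^2 \geq |\partial_{\delta_j}(\chi_j\varphi)|^2$ gives
\begin{equation*}
h_0[\chi_j\varphi,\chi_j\varphi] \geq D_{j,-}\rho_{j,-}\int_{\Gamma_{j,\nu_1}} \delta_j^{\beta_j+\gamma_j}|\partial_{\delta_j}(\chi_j\varphi)|^2\,dx.
\end{equation*}
On each fiber the function $u(r) := (\chi_j\varphi)\circ\Psi(y,r\omega)$ is compactly supported in $(0,\nu_1)$, since $\varphi \in C_0^\infty(\Omega)$ vanishes near $\Gamma_j$ and $\chi_j$ vanishes at the outer boundary of the tube. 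The standard one-dimensional weighted Hardy inequality
\begin{equation*}
\int_0^{\nu_1} r^a |u'(r)|^2\,dr \geq \Big(\tfrac{a-1}{2}\Big)^2\int_0^{\nu_1} r^{a-2}|u(r)|^2\,dr, \qquad a = \beta_j+\gamma_j+d-d_j-1,
\end{equation*}
then applies with the sharp constant $\big((\beta_j+\gamma_j+d-d_j-2)/2\big)^2$. Pairing with the Jacobian $(1+O(r))$ and using the upper bound $\rho_\infty \leq \rho_{j,+}\delta_j^{\gamma_j}$ produces exactly $\int \mathscr H(x)|\chi_j\varphi|^2\rho_\infty\,dx$ with $\mathscr H$ of the stated form. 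Because $\chi_j\equiv 1$ on $\Gamma_{j,\nu_1/2}$, summing over $j\in J$ and combining with the IMS remainder finishes the argument.

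The hard part will be the careful bookkeeping of the Jacobian correction $(1+O(\delta_j))$, which both uses the $C^2$ hypothesis on $\Gamma_j$ and is precisely what generates the $(1-C_2\delta_j)$ factor in $\mathscr H$, with $C_2$ chosen so that $\nu_1 C_2 \leq 1$ keeps the integrand nonnegative. A secondary but essential point is the verification that $u$ vanishes near $r=0$, which legitimizes the Hardy integration-by-parts without boundary terms and thereby unlocks the optimal Hardy constant appearing in \eqref{E:2.2.16}.
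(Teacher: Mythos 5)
Your plan is correct in outline but follows a genuinely different route from the paper. The paper does not localize by an IMS partition and does not pass to tubular-neighborhood coordinates at all: it invokes the vector-field (ground-state representation) inequality of Brusentsev, $h_0[\varphi,\varphi]\geq\int_\Omega\big(\nabla\cdot X-X\cdot(\rho_\infty\ID)^{-1}X\big)|\varphi|^2dx$ for any $C^1$ field $X$, chooses $X=\sum_{j\in J}h_j\,\delta_j^{\gamma_j+\beta_j-1}\nabla\delta_j\,\psi_j$ with cut-offs $\psi_j$, optimizes the constants $h_j$, and controls the error through the Laplacian comparison $\big|\Delta\delta_j-(d-d_j-1)/\delta_j\big|\leq C_3$ valid on a $C^2$ collar (their Lemma~\ref{L:4.5}); the $-C_1\dlab\varphi,\varphi\drab$ term absorbs the transition region of the $\psi_j$, exactly as your IMS remainder does. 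Your fiberwise one-dimensional Hardy inequality is morally the same computation written in normal coordinates (the exponent $a=\beta_j+\gamma_j+d-d_j-1$ encodes the same $(d-d_j-1)/\delta_j$ term that the paper extracts from $\Delta\delta_j$), and both arguments consume the $C^2$ hypothesis at the same place. What the paper's route buys is that it never needs the co-area/tubular parametrization or its Jacobian: all the geometry enters through the single scalar estimate on $\Delta\delta_j$. What your route buys is transparency about where the sharp constant $\big((\beta_j+\gamma_j+d-d_j-2)/2\big)^2$ comes from.

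One step of yours must be executed with care, and it is the one you flag. If you handle the Jacobian $J_0(r)=1+O(r)$ by sandwiching it between the constants $1\pm C\nu_1$ and then applying the \emph{unperturbed} Hardy inequality, you only obtain $\mathscr H$ with a constant deficit factor $(1-C\nu_1)/(1+C\nu_1)$, which is strictly weaker than the stated pointwise factor $(1-C_2\delta_j(x))$ and, more importantly, is not enough for the later use of the lemma in Theorem~\ref{T:4}: there the estimate \eqref{E:gradMgB} relies on $\big(1-\tfrac{1}{(2-\beta_j)\ln(1/\delta_j)}\big)^2\leq 1-C_2\delta_j$ via \eqref{E:tlnt}, and the left-hand side tends to $1$ as $\delta_j\to0$, so no barrier with a fixed deficit bounded away from $1$ would do. You must therefore run the Hardy argument directly with the perturbed weight $r^aJ_0(r)$ — substitute $u=r^{(1-a)/2}v$, integrate by parts, and use $J_0'=O(1)$ (again the $C^2$ regularity) to see that the error is $\tfrac{a-1}{2}\,r^{a-1}J_0'$, i.e.\ of relative size $O(r)$ against the main term $\tfrac{(a-1)^2}{4}r^{a-2}J_0$. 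That is what actually produces $(1-C_2\delta_j(x))$, and it is precisely the paper's vector-field computation carried out fiber by fiber.
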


Before starting the proof of this lemma, we need to record two 
results. The first one gives the properties of  $\delta(x)$ near $\Gamma_j$, $j \in J$
\begin{lemma}\cite[Lemma 6.1 and Lemma 6.2]{Br}\label{L:4.5}
Let  $j\in J$.
Then there exist $\nu_2>0$ and $C_3<\infty$ such that
$\delta=\delta_j\in C^2(\Gamma_{j,\nu_2})$, and 
\begin{equation}\label{E:4.66}
|\nabla\delta(x)|=1\quad\text{and}\quad \left|\Delta\delta(x)-\frac{d-d_j-1}{\delta(x)}\right|\leq C_3
\quad\text{for all } x\in\Gamma_{j,\nu_2}.
\end{equation}
\end{lemma}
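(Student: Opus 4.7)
The plan is to reduce to a local computation in a tubular neighborhood of $\Gamma_j$, once regularity has been extracted from the tubular neighborhood theorem. Because $\Gamma_j$ is a $C^2$ submanifold of $\mathbb R^d$ of dimension $d_j$, and $\Gamma_j \subset \partial\Omega$ is compact (since $\Omega$ is bounded in Theorem~\ref{T:4}), the tubular neighborhood theorem yields $\nu_2 > 0$ such that the nearest-point projection $\pi \colon \Gamma_{j,\nu_2} \to \Gamma_j$ is well defined and of class $C^1$, and $\Phi(y,v) = y + v$ is a $C^1$-diffeomorphism from $\{(y,v) : y\in\Gamma_j,\ v \in N_y\Gamma_j,\ |v|<\nu_2\}$ onto $\Gamma_{j,\nu_2}$. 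In particular $\delta_j(x)=|x-\pi(x)|$ on this set, so $\delta_j \in C^2(\Gamma_{j,\nu_2}\setminus \Gamma_j)$.

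For the unit gradient I would differentiate the identity $\delta_j(x)^2 = \langle x-\pi(x),\, x-\pi(x)\rangle$ to obtain $2\delta_j\nabla\delta_j = 2(x-\pi(x)) - 2\, d\pi(x)^T(x-\pi(x))$. The optimality of $\pi(x)$ gives $x-\pi(x) \perp T_{\pi(x)}\Gamma_j$, while $d\pi(x)$ maps $\mathbb R^d$ into $T_{\pi(x)}\Gamma_j$, so $d\pi(x)^T(x-\pi(x))=0$; hence $\nabla\delta_j(x) = (x-\pi(x))/\delta_j(x)$, a unit vector.

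For the Laplacian estimate I would set up Fermi coordinates: near $y_0 \in \Gamma_j$, take a local $C^2$ parametrization $\gamma \colon U \subset \mathbb R^{d_j} \to \Gamma_j$ and a $C^1$ orthonormal frame $e_1(u),\ldots,e_k(u)$ of the normal bundle, where $k = d-d_j$. The map $\Phi(u,s) = \gamma(u) + \sum_{i=1}^k s_i\, e_i(u)$ is a $C^1$-diffeomorphism on a neighborhood of $(u_0,0)$ with $\delta_j(\Phi(u,s))=|s|$. Because the $e_i(u)$ are orthonormal and orthogonal to $T_{\gamma(u)}\Gamma_j$, the pulled-back Euclidean metric has the block form
$$
g_{IJ}(u,s)=\begin{pmatrix} g^{\Gamma}_{ab}(u)+O(|s|) & O(|s|) \\ O(|s|) & \mathds 1_k \end{pmatrix}, \qquad \sqrt{|g|}=\sqrt{|g^{\Gamma}|}\,\bigl(1+r(u,s)\bigr),
$$
with $r\in C^1$ and $r(u,0)=0$. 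Applying the intrinsic formula $\Delta f = |g|^{-1/2}\partial_I\bigl(|g|^{1/2}g^{IJ}\partial_J f\bigr)$ to $f(u,s)=|s|$, the isotropic radial part in the $s$-variables produces exactly $(k-1)/|s|=(d-d_j-1)/\delta_j$, while the additional contributions from $\partial_{s_i}\ln\sqrt{|g|}$ and from the off-diagonal $g^{IJ}$-blocks are bounded uniformly, since they involve only first derivatives of $\gamma$ and of the $e_i$ and carry at least one factor of $s$ which cancels the $1/|s|$ from $\partial_{s_i}|s|$.

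Covering the compact manifold $\Gamma_j$ by finitely many such coordinate charts and shrinking $\nu_2$ so that the associated tubular neighborhoods still cover $\Gamma_{j,\nu_2}$, the resulting local bounds combine into a single constant $C_3<\infty$. The main obstacle is the bookkeeping in the Fermi-coordinate computation: one must verify that every term beyond $(d-d_j-1)/\delta_j$ is genuinely $O(1)$ up to $\Gamma_j$, which is precisely where $C^2$-regularity of $\Gamma_j$ enters, via control of $\partial_{u^a}e_i$ and of the determinant expansion. A cleaner alternative is to interpret $\Delta\delta_j(x)$ as the mean curvature of the level set $\{\delta_j=\delta_j(x)\}$ at $x$ and expand the Weingarten map of this level set in powers of $\delta_j$: the leading term is the trace of the radial shape operator of the normal sphere of radius $\delta_j$, namely $(k-1)/\delta_j$, and the $O(1)$ remainder is controlled by the $C^2$-norm of $\Gamma_j$.
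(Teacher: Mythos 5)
The paper does not prove this lemma at all: it is imported verbatim from Brusentsev \cite[Lemmas 6.1 and 6.2]{Br}, so there is no internal argument to compare against, and your task was really to reconstruct a classical fact about the distance function to a compact $C^2$ submanifold. Your reconstruction is essentially the standard one and is correct in outline: the tubular neighborhood theorem gives a $C^1$ nearest-point projection $\pi$, the identity $\nabla\delta_j=(x-\pi(x))/\delta_j$ follows exactly as you say from $x-\pi(x)\perp T_{\pi(x)}\Gamma_j$ and $\operatorname{ran}d\pi(x)\subset T_{\pi(x)}\Gamma_j$, and the leading term $(d-d_j-1)/\delta_j$ is the radial Laplacian of $|s|$ in the $k=d-d_j$ normal directions.

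The one place your write-up would need repair is the Fermi-coordinate computation of $\Delta\delta_j$. For a submanifold that is only $C^2$, the frame $e_i$ is only $C^1$, so the pulled-back metric $g_{IJ}(u,s)$ is merely continuous in the tangential variables $u$; the divergence-form expression $\Delta f=|g|^{-1/2}\partial_I\bigl(|g|^{1/2}g^{IJ}\partial_Jf\bigr)$ then contains the term $\partial_{u^a}\bigl(|g|^{1/2}g^{a s_i}s_i/|s|\bigr)$, which you cannot differentiate term by term at this regularity, even though the off-diagonal block is $O(|s|)$. The clean fix is the route you sketch only in your last sentence: differentiate $\nabla\delta_j=(x-\pi(x))/\delta_j$ directly to get the coordinate-free identity
\begin{equation*}
\Delta\delta_j(x)=\frac{d-1-\operatorname{tr}d\pi(x)}{\delta_j(x)}\,,
\end{equation*}
and then use the explicit representation of $d\pi$ at $x=y+tv$ (with $v$ a unit normal at $y=\pi(x)$) in terms of the shape operator $S_v$ of $\Gamma_j$, namely $\operatorname{tr}d\pi(x)=\sum_a(1-t\kappa_a(y,v))^{-1}=d_j+t\operatorname{tr}S_v+O(t^2)$, which is exactly where the $C^2$ regularity and the compactness of $\Gamma_j$ enter to produce a uniform $C_3$. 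With that substitution your argument is complete; as written, the Fermi-coordinate bookkeeping paragraph asserts bounds on quantities whose existence is not guaranteed at $C^2$ regularity.
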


The second result describes the so called "vector field/ground state representation approach"  to Hardy inequalities
\begin{lemma}\cite[Theorem 4.1]{Br}; see also \cite{BFT},\cite{Mi},\cite{Lu}.\label{L:4.6}
Let $X \in C^1(\Omega;\IR^d)$ be a differentiable real vector field on $\Omega$. Then
\begin{equation}\label{E:4.67}
h_0[\varphi, \varphi]  \geq  \int_{\Omega}\left(\nabla\cdot X(x)-X(x)\cdot \big(\rho_\infty(x) \ID(x)\big)^{-1}X(x)\right)\,
\big|\varphi(x)\big|^2\,dx\,,
\end{equation}
for all $\varphi\in C_0^\infty(\Omega)$.
\end{lemma}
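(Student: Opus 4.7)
The plan is to prove \eqref{E:4.67} by integration by parts combined with a weighted matrix Cauchy--Schwarz estimate. The argument is elementary, and the only subtlety is choosing the split of the weight $\rho_\infty$ so that the resulting gradient term lines up exactly with $h_0[\varphi,\varphi]$.

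First, since $\varphi\in C_0^\infty(\Omega)$ has compact support in $\Omega$ and $X\in C^1(\Omega;\IR^d)$, integrating by parts (the boundary term vanishes because $\supp\varphi$ is a compact subset of $\Omega$) gives
\[
\int_\Omega (\nabla\cdot X)\,|\varphi|^2\,dx \;=\; -\!\int_\Omega X\cdot\nabla|\varphi|^2\,dx \;=\; -2\,\Re\!\int_\Omega \overline{\varphi}\,\bigl(X\cdot\nabla\varphi\bigr)\,dx,
\]
where I used that $X$ is real-valued to write $\nabla|\varphi|^2 = 2\,\Re(\overline\varphi\,\nabla\varphi)$. In particular,
\[
\int_\Omega (\nabla\cdot X)\,|\varphi|^2\,dx \;\leq\; 2\!\int_\Omega |\varphi|\,\bigl|X\cdot\nabla\varphi\bigr|\,dx.
\]

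Second, for each $x\in\Omega$ the matrix Cauchy--Schwarz inequality applied to the positive definite matrix $\ID(x)$ yields
\[
\bigl|X(x)\cdot\nabla\varphi(x)\bigr|^{2} \;\leq\; \bigl(X(x)\cdot \ID(x)^{-1} X(x)\bigr)\bigl(\overline{\nabla\varphi(x)}\cdot \ID(x)\,\nabla\varphi(x)\bigr);
\]
for complex $\nabla\varphi$ this follows by splitting into real and imaginary parts and applying the usual real Cauchy--Schwarz to each, using that $X$ is real. Combining with the elementary inequality $2ab\le a^2+b^2$ applied to
\[
a \;=\; \sqrt{\rho_\infty\,\overline{\nabla\varphi}\cdot\ID\,\nabla\varphi}\,,\qquad b \;=\; |\varphi|\,\sqrt{\tfrac{X\cdot\ID^{-1}X}{\rho_\infty}}\,,
\]
yields the pointwise bound
\[
2\,|\varphi|\,|X\cdot\nabla\varphi| \;\leq\; \rho_\infty\,\overline{\nabla\varphi}\cdot\ID\,\nabla\varphi \;+\; \frac{X\cdot\ID^{-1}X}{\rho_\infty}\,|\varphi|^2.
\]

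Third, integrating this bound over $\Omega$ and recognising the first term on the right as $h_0[\varphi,\varphi]$ (the quadratic form of $H_0$, i.e.\ of $H$ with $V\equiv 0$), I obtain
\[
\int_\Omega (\nabla\cdot X)\,|\varphi|^2\,dx \;\leq\; h_0[\varphi,\varphi] \;+\; \int_\Omega X\cdot(\rho_\infty\ID)^{-1}X\,|\varphi|^2\,dx,
\]
which is exactly \eqref{E:4.67} after rearrangement. There is no substantive obstacle: the vanishing of the boundary term is automatic from the compact support of $\varphi$, and the sole balancing act is to split $\rho_\infty = \rho_\infty^{1/2}\cdot \rho_\infty^{1/2}$ in the AM--GM step so that the gradient contribution picks up precisely the weight $\rho_\infty$ appearing in the definition of $h_0[\varphi,\varphi]$, while the compensating factor $\rho_\infty^{-1}$ lands on the $X\cdot\ID^{-1}X$ term, producing the combination $X\cdot(\rho_\infty\ID)^{-1}X$.
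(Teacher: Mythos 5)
Your proof is correct and complete: the divergence-theorem identity, the weighted Cauchy--Schwarz with respect to $\ID$, and the arithmetic--geometric mean split of $\rho_\infty$ combine exactly as you claim to give \eqref{E:4.67}. The paper itself gives no proof of this lemma (it is quoted from \cite{Br}), and your argument is precisely the standard vector-field proof underlying the cited result, so there is nothing of substance to compare beyond noting that you have made the cited lemma self-contained.
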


\begin{proof}[Proof of Lemma~\ref{L:H}]
The proof consists in making an appropriate  choice of the vector field $X$ and plugging it in \eqref{E:4.67}.
For $j\in J$ and $x \in \Gamma_{j, \nu_3}$, $\nu_3= \min\{1, \nu_0,\nu_2\}$, let
\begin{equation}\label{E:Xj}
X_j(x)= h_j\delta_j(x)^{\gamma_j+\beta_j-1}\,\nabla \delta_j(x)\,,
\end{equation}
with the constant $h_j$ to be determined later.
Given these choices, we set
\begin{equation}\label{E:X}
X(x)=\sum_{j\in J} X_j(x)\psi_j(x)\,,
\end{equation}
where $\psi_j\in C^1(\Omega)$ are cut-off functions, $0\leq \psi_j\leq 1$ on $\Omega$, and
\begin{equation}\label{E:psij}
\psi_j(x)=
\begin{cases}
1 & \quad\text{for } x\in\Gamma_{j,\frac{\nu_3}{2}}\\
0 & \quad\text{for } x\not\in\Gamma_{j,\nu_3} 
\end{cases}\,,
\end{equation}

To estimate the integrand on the rhs of \eqref{E:4.67} with the $X$ chosen above, we first notice that,
for $x\in\Omega$, the following expressions hold:
\begin{equation}\label{E:XDX}
\begin{aligned}
\nabla\cdot X-X\cdot \big(\rho_\infty \ID\big)^{-1}X
&=\begin{cases}
\nabla\cdot \big(X_j\psi_j\big)-\big(X_j\psi_j\big)\cdot \big(\rho_\infty \ID\big)^{-1}\big(X_j\psi_j\big)
&\quad\text{on each } \Gamma_{j,\nu_3}\,,\\
0 &\quad \text{otherwise}\,,
\end{cases}\\
&=\begin{cases}
\nabla\cdot X_j-X_j\cdot \big(\rho_\infty \ID\big)^{-1}X_j
&\quad\text{on each } \Gamma_{j,\frac{\nu_3}{2}}\,,\\
\nabla\cdot \big(X_j\psi_j\big)-\big(X_j\psi_j\big)\cdot \big(\rho_\infty \ID\big)^{-1}\big(X_j\psi_j\big)
&\quad\text{on each } \Gamma_{j,\nu_3}\setminus \Gamma_{j,\frac{\nu_3}{2}}\,,\\
0 &\quad \text{otherwise.}
\end{cases}
\end{aligned}
\end{equation}

On $\Gamma_{j,\nu_3}$,
 with the hypotheses on $\rho_{\infty}$ and $\mathbb D$ one obtains (see \eqref{E:2.2.15},  \eqref{E:Xj})
\begin{equation}\label{E:divX}
\nabla\cdot X_j
=
h_j \left((\beta_j+\gamma_j-1)\delta_j^{\beta_j+\gamma_j-2}\big|\nabla\delta\big|^2+
    \delta^{\beta_j+\gamma_j-1}\Delta\delta_j\right),
\end{equation}
\begin{equation}\label{E:XX}
X_j\cdot \big(\rho_\infty \ID\big)^{-1}X_j \leq (\rho_{j,-} D_{j,-})^{-1} h_j^2\delta_j^{\gamma_j+\beta_j-2}\big|\nabla\delta_j\big|^2\,.
\end{equation}

If we further recall that $\psi_j\equiv 1$ on $\Gamma_{j,\frac{\nu_3}{2}}$, we obtain from \eqref{E:divX} and \eqref{E:XX} that for $x\in \Gamma_{j,\frac{\nu_3}{2}}$, $j \in J$
\begin{equation*}
\nabla\cdot X-X\cdot \big(\rho_\infty \ID\big)^{-1}X
\geq 
\delta_j^{\gamma_j+\beta_j-2}\,\left[h_j(\gamma_j+\beta_j+d-d_j-2)-\frac{h_j^2}{\rho_{j,-} D_{j,-}} 
+h_j \delta_j\,\left(\Delta\delta-\frac{d_j-1}{\delta_j}\right)\right]
\end{equation*}

We now fix $h_j$ by maximizing the quadratic polynomial given by the first two terms in the square bracket. This leads to
\begin{equation*}
h_j = \rho_{j,-} D_{j,-} \frac{\beta_j +\gamma_j+d-d_j-2}{2 }\,.
\end{equation*}
Hence, for $x\in \Gamma_{j,\frac{\nu_3}{2}}$, $j \in J$
\begin{equation*}
\begin{aligned}
\nabla\cdot X-X\cdot \big(\rho_\infty \ID\big)^{-1}X
&\geq \rho_{j,-} D_{j,-} \delta_j^{\gamma_j+\beta_j-2}\left[\left( \frac{\beta_j +\gamma_j +d -d_j -2}{2}\right)^2\right. \\
&\qquad\qquad\qquad\qquad\left.+\delta_j\left(\frac{2}{\beta_j +\gamma_j +d_j -2}\right)
\left(\Delta\delta-\frac{d_j-1}{\delta_j}\right)\right]
\end{aligned}
\end{equation*}
which together with \eqref{E:4.66} leads to (notice that due to \eqref{E:2.2.16}, $\beta_j +\gamma_j +d -d_j -2 \neq0$)
\begin{equation}\label{E:XXgeq}
\begin{aligned}
\nabla\cdot X-X\cdot \big(\rho_\infty \ID\big)^{-1}X
&\geq \rho_{j,-} D_{j,-} \delta_j^{\gamma_j+\beta_j-2}\left( \frac{\beta_j +\gamma_j +d -d_j -2}{2}\right)^2\times\\
&\qquad\qquad\qquad\times\left[ 1-
\delta_j \Big| \frac{2}{\beta_j +\gamma_j +d_j -2}\Big| C_3\right],
\end{aligned}
\end{equation} 
for all $x\in \Gamma_{j,\frac{\nu_3}{2}}$, $j \in J$.

Take now
\begin{equation} \label{E:nu1}
0<\nu_1 \leq \min \{\nu_3, \Big| \frac{\beta_j +\gamma_j +d -d_j -2}{2}\Big| C_3\},
\end{equation}
\begin{equation}\label{E:C2}
C_2 =\Big| \frac{2}{\beta_j +\gamma_j +d_j -2}\Big| C_3,
\end{equation}
\begin{equation}\label{E:C1}
C_1(\nu_1)= \sup_{x\in \cup_{j\in J}\left(\Gamma_{j,\nu_3}\setminus \Gamma_{j,\frac{\nu_1}{2}}\right)}
\frac{\big|\nabla\cdot X(x)-X(x)\cdot \big(\rho_\infty(x) \mathbb D(x))^{-1}X(x)\big|}{\rho_{\infty}(x)} .
\end{equation}
From \eqref{E:XX}, \eqref{E:divX},  \eqref{E:XDX} and the properties of $\psi_j$ it follows immediately that that for all $\nu_1 >0$, $C_1(\nu_1)<\infty$.

With these choices Lemma \ref{L:H} follows from Lemma \ref{L:4.6}, \eqref{E:XXgeq} and \eqref{E:2.2.15}.
\end{proof}

\begin{proof}[Proof of Theorem~\ref{T:4}]
Fix $\nu_1$ in Lemma \ref{L:H} small enough such that besides \eqref{E:nu1},  $\nu_1 \leq e^{-e}$ and in addition
for all $j \in J$ and $t \in (0, \nu_1)$
\begin{equation} \label{E:tlnt}
\frac{1}{2} \geq \frac{1}{(2-\beta_j)\ln\frac{1}{t}} \geq C_2t.
\end{equation}
From Lemma \ref{L:H} we get that the condition \eqref{E:h} in Lemma \ref{L:B} with $E_0=-C_1$ and 
\begin{equation}\label{E:B}
B(x)=\begin{cases}
\frac{D_{j,-}\rho_{j,-}}{4\rho_{j,+}} (\beta_j +\gamma_j +d -d_j -2)^2\delta_j(x)^{\beta_j -2} (1-C_2 \delta_j(x)) &
 \quad\text{for } x\in\Gamma_{j,\frac{\nu_1}{2}}, j\in J\\
0 & \quad\text{othervise} 
\end{cases}\,,
\end{equation}
is satisfied. Further, in order to apply Lemma \ref{L:B} we have to choose $g(x)$, $E >E_0$ such that \eqref{E:3.1.4} holds true.

For $j\in J$, $t\in (0,e^{-e})$, let:
\begin{equation}\label{E:Gj}
G_j(t)= \ln t^{\frac{2-\beta_j}{2}}+\frac{1}{2}\ln \ln \frac{1}{t}.
\end{equation}
For later use, notice that
\begin{equation}\label{E:expG}
t^{\beta_j}e^{2G_j(t)}= t^2\ln \frac{1}{t},
\end{equation}
and
\begin{equation}\label{E:dGj}
\frac{d}{dt}G_j(t)=\frac{2-\beta_j}{2t}\left(1-\frac{1}{(2-\beta_j)\ln\frac{1}{t}}\right).
\end{equation}
We  set
\begin{equation}\label{E:gTh4}
g(x)= \sum_{j \in J}G_j(\delta_j(x))\psi_j(x),
\end{equation}
where $\psi_j$ are given by \eqref{E:psij} with $\nu_3$ replaced by $\nu_1$.
By construction, the terms on the r.h.s. of \eqref{E:gTh4} have disjoint supports, 
\begin{equation}\label{E:supgradg}
\sup_{x \in \Omega \setminus (\cup_{j\in J} \Gamma_{j,\frac{\nu_1}{2}})}|\nM g(x)|_{_M} \leq C_4,
\end{equation}
and for any compact $K \subset \Omega$
\begin{equation}\label{E:infexpg}
\inf_{x\in K}e^{2g(x)} \geq C(K) >0.
\end{equation}

Now, if
\begin{equation}\label{E:EE0}
\frac{E_0-E}{2} \geq C_4^2,
\end{equation}
\eqref{E:3.1.4} holds true. Indeed, on $\Omega \setminus (\cup_{j\in J} \Gamma_{j,\frac{\nu_1}{2}})$      
  from \eqref{E:supgradg}, \eqref{E:EE0} 
\begin{equation}\label{E:gradMg} 
|\nM g|_{_M}^2 \leq C_4^2 \leq \frac{|E_0-E|}{2} \leq \frac{|E_0-E|}{2} +B.
\end{equation}

On the other hand on $  \Gamma_{j,{\frac{\nu_1}{2}}}$, from the definition of $g$, $|\nabla\delta_j|=1$, $\psi_j =1$, 
\eqref{E:2.2.15}, \eqref{E:dGj}, \eqref{E:tlnt}, \eqref{E:B} and \eqref{E:2.2.16}:

\begin{equation}\label{E:gradMgB}
\begin{aligned}
|\nM g|_{_M}^2
&= \nabla g\cdot \mathbb D\nabla g 
 \leq D_{j,+}(G'(\delta_j))^2\delta_j^{\beta_j}\\
&\leq D_{j,+}\left(\frac{2-\beta_j}{2}\right)^2  \delta_j^{\beta_j-2}\left(1-\frac{1}{(2-\beta_j)\ln\frac{1}{\delta_j}}\right)^2 \\
& \leq\frac{D_{j,+}\rho_{j,+}}{D_{j,-}\rho_{j,-}}\left(\frac{2-\beta_j}{\beta_j +\gamma_j +d-d_j-2}\right)^2
\frac{\left(1-\frac{1}{(2-\beta_j)\ln\frac{1}{\delta_j}}\right)^2}{1-C_2\delta_j}B\\
&\leq B\leq B+\frac{|E-E_0|}{2}.
\end{aligned}
\end{equation}
According to the general scheme it remains to choose $\phi$ and estimate the r.h.s. of \eqref{E:3.1.5}. As expected the choice of $\phi$ is very similar to the one in the proof of Theorem \ref{T:ess_sa_1}. More precisely, let for $l=1,2,...$, $r_l=r_12^{1-l}$, $r_1 \leq \frac{\nu_1}{2}$,  $k_l(t)$ as in \eqref{E:3.2.2,3} and
\begin{equation}\label{E:phil}
\phi(x)=k_l(\delta(x)).
\end{equation}
Further, if
\begin{equation}\label{E:Bjl}
B_{j,l}=\{x\in \Gamma_{j, \frac{\nu_1}{2}}\,\big|\, r_{l+1} <\delta_j(x)<r_l\}
\end{equation}
\begin{equation}\label{E:Bl}
B_l =\bigcup_{j=1}^p B_{j,l}
\end{equation}
 then the $B_{j,l}$ are disjoint and
 \begin{equation}\label{E:gradphi}
 \supp \nabla\phi_l \subset B_l,\quad |\nabla \phi_l| \leq \frac{2}{r_{l+1}}.
 \end{equation}
 From \eqref{E:3.1.5}, \eqref{E:3.1.7}, \eqref{E:gradphi}, \eqref{E:EE0} and the fact that $B_{j,l}$ are disjoint one has:
 \begin{equation}\label{E:PsiEPsi}
\dlab\psi_E , e^{2g}\phi_l^2\psi_E\drab\leq \frac{2}{|E-E_0|} \sum_{j=1}^p\int_{B_{j,l}}m_{j,l}(x)|\Psi_{E}(x)|^2\rho_{\infty}(x)dx,
\end{equation}
\begin{equation}\label{E:mjl}
m_{j,l}(x)=e^{2g(x)}\big(2|\nM g(x)|_{_M} |\nM\phi_l(x)|_{_M}+|\nM \phi_l(x)|_{_M}^2 \big)\,.
\end{equation}
In estimating the r.h.s. of \eqref{E:PsiEPsi}, consider first the terms $j \in J$. Here  from the fact that $G_j(t)$ is increasing, \eqref{E:expG}, \eqref{E:gradMgB} and \eqref{E:gradphi}
:
\begin{equation}\label{exp2g}
e^{2g(x)} = e^{2G_j(\delta_j(x))}\leq r_l^{2-\beta_j}\ln\frac{1}{r_l}
\end{equation}
\begin{equation}\label{E:nablagphi}
|\nM g(x)|_{_M}\leq (D_{j,+}\delta_j(x)^{\beta_j})^{1/2}\frac{2-\beta_j}{2\delta_j(x)},\quad |\nM \phi_l(x)|_{_M} \leq \frac{2(D_{j,+}\delta_j(x)^{\beta_j})^{1/2}}{r_{l+1}}
\end{equation}
which gives
\begin{equation}\label{mjlJ}
m_{l,j}(x) \leq C_5l.
\end{equation} 
For $j\notin J$, since (see \eqref{E:gTh4}) $g(x)=0$, from \eqref{E:gradphi} , \eqref{E:mjl} and \eqref{E:2.2.14}:
\begin{equation}\label{mjlnonJ}
m_{l,j}(x) \leq C_6.
\end{equation}
Putting together \eqref{E:PsiEPsi}, \eqref{mjlJ} and \eqref{mjlnonJ} one gets:
\begin{equation}\label{E:PsiEPsil}
\dlab\psi_E , e^{2g}\phi_l^2\psi_E\drab\leq C_7 \frac{p}{|E-E_0|}l\int_{B_l}|\Psi_{E}(x)|^2\rho_{\infty}(x)dx.
\end{equation}

From this point on, the proof closely mimics the end of the proof of Theorem \ref{T:ess_sa_1}. Fix a compact $K\subset \Omega$. There exists an integer $L(K)$ such that
$K \subset \{x \in \Omega\big| \delta(x) < r_{L(K)}\}$. Taking into account that by construction $\phi \big|_K =1$, \eqref{E:infexpg} and \eqref{E:PsiEPsil} one finds
\begin{equation*}
C(K)\int_K|\Psi_{E}(x)|^2\rho_{\infty}(x)dx \leq \int_K|\Psi_{E}(x)|^2f_l(x)^2\rho_{\infty}(x)dx \leq C_7 \frac{p}{|E-E_0|}l\int_{B_l}|\Psi_{E}(x)|^2\rho_{\infty}(x)dx
\end{equation*}
which gives
\begin{equation*}
\frac{|E-E_0|C(K)}{2C_7}\frac{1}{l}\int_K|\Psi_{E}(x)|^22\rho_{\infty}(x)dx \leq \int_{B_l}|\Psi_{E}(x)|^2\rho_{\infty}(x)dx.
\end{equation*}

 Summing over $l $ from $L(K)$ to $N$ and taking onto account that $B_l $ are disjoint one obtains
 \begin{equation*}
\frac{|E-E_0|C(K)}{2C_7}\left(\sum_{l=L(K)}^N\frac{1}{l}\right)\int_K|\Psi_{E}(x)|^2\rho_{\infty}(x)dx \leq \|\Psi_E\|^2,
\end{equation*}
which in the limit $N\rightarrow \infty$ gives $\int_K|\Psi_{E}(x)|^2\rho_{\infty}(x)dx =0$, hence $\Psi_E=0$  and the application of Theorem \ref{ES} 
finishes the proof.

 \end{proof}
\section{Remarks and examples}\label{S:6}

\begin{remark}
In the example below  $M$ is not complete, $\text{diam}(M) <\infty$ and for $r$ sufficiently small $S_r = \{ x\in \Omega\, |\,\delta_M(x) \geq r\}$ is not compact, 
hence  Assumption \textbf{A} is not superfluous. 

Let $\Omega \subset \mathbb R^2$, 
$\Omega =\{x=(x_1,x_2)\,|\, x_1 \in \mathbb R, |x_2|<1\}$, $\mathbb D=(1-x_2^2)^{-1}\mathds 1$. By direct computation
\begin{equation}
\delta_M(x)= \int_{|x_2|}^1(1-u^2)^{1/2}du
\end{equation}
and for $|x_1|$ sufficiently large
\begin{equation}
d_M(x,0)=\int_0^1(1-u^2)^{1/2}du +\int_{|x_2|}^1(1-u^2)^{1/2}du.
\end{equation}
It follows that for  $r$ sufficiently small $\{x\in \Omega\, |\, x_2 =0\} \subset S_r$, hence $S_r$ cannot be compact.
\end{remark}

\begin{remark}
In Theorems \ref{T:3} and \ref{T:4} we restricted ourselves to bounded domains. As far as  $\partial \Omega$ is bounded, 
the extension to unbounded domains can be treated in the same manner by considering the point  at infinity as a part, $\Gamma_{\infty},$ of the 
 "boundary" of $\Omega$ and imposing conditions on $\mathbb D(x)$ and $\rho_{\infty}(x)$ as $|x| \rightarrow \infty$. For example by  slight modifications  
of the proof of Theorem \ref{T:3} and Theorem \ref{T:4} one obtains:
\begin{theorem}\label{T:5}
Suppose $\{x \in \mathbb R^d \,|\, |x| > R\} \subset \Omega$ for some $R<\infty$, $\partial \Omega$ is as in Theorem \ref{T:3},
$\partial \Omega \subset \{x \in \mathbb R^d \,|\, |x| < R/2\}$.

i. In addition to the conditions in Theorem \ref{T:3} assume that for $|x| >R$:
\begin{equation}\label{ainfty}
a(x) \leq K|x|^{\beta_{\infty}}, \quad \beta_{\infty} \leq 2,
\end{equation}
\begin{equation}\label{roinfty}
\rho_{\infty}(x) \leq
\begin{cases}
e^{L|x|^{1-\frac{\beta_{\infty}}{2}}} &\quad \text{if } \quad  \beta_{\infty} <2\,;\\
 & \\
|x|^{L} &\quad \text{if} \quad  \beta_{\infty} =2\,.
\end{cases}
\end{equation}
Then $H_0$
is stochastically complete.

ii.  In addition to the conditions in Theorem \ref{T:4} assume that for $|x| >R$ \eqref{ainfty} holds true.
Then $H_0$ is essentially self-adjoint.
\end{theorem}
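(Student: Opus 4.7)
The plan is to augment the proofs of Theorems \ref{T:3} and \ref{T:4} by treating the point at infinity as an additional boundary piece. Since $\partial\Omega \subset \{|x| < R/2\}$, the exterior region $\{|x| > R\}$ is disjoint from the neighborhoods $\Gamma_{j,\nu_0}$ of the finite boundary components, so any cut-off or exponential weight adapted to infinity has support disjoint from those used near $\partial\Omega$. The two strategies can therefore be superposed by multiplication of cut-offs and addition of weights, and the basic inequality decomposes into a ``near-$\partial\Omega$'' and a ``near-$\infty$'' contribution which can be estimated independently.

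The key observation is that \eqref{ainfty}--\eqref{roinfty} are the Euclidean translation of the completeness-at-infinity hypothesis \eqref{E:2.2.4} from Theorem \ref{T:2}(i): the $\mathbb D^{-1}$-distance from the origin to $|x|=r$ grows like $r^{1-\beta_\infty/2}$ when $\beta_\infty < 2$ and like $\ln r$ when $\beta_\infty = 2$, precisely matching the exponent appearing in \eqref{roinfty}. I would therefore take standard smooth cut-offs $\chi_{R'}(x)=1$ for $|x|\le R'$, $\chi_{R'}(x)=0$ for $|x|\ge 2R'$, with $|\nabla\chi_{R'}|\le C/R'$, and introduce an infinity weight
\begin{equation*}
g_\infty(x) = \begin{cases} -\alpha\,|x|^{1-\beta_\infty/2} & \text{if } \beta_\infty < 2, \\ -\alpha \ln|x| & \text{if } \beta_\infty = 2, \end{cases}
\end{equation*}
smoothly extended by $0$ on $\{|x|<R/2\}$. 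A direct calculation using \eqref{ainfty} shows that $|\nM g_\infty|_{_M}^2$ is uniformly bounded on $\{|x|>R\}$ by a constant depending only on $\alpha$ and $K$, so choosing $|E-E_0|$ large preserves condition \eqref{E:3.1.4} in Lemma \ref{L:B}, while \eqref{roinfty} together with $\alpha$ large forces $e^{2g_\infty(x)}\rho_\infty(x)$ to decay rapidly as $|x|\to\infty$.

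With these tools, for part (i) I would run the proof of Theorem \ref{T:3} with cut-off $\phi_\nu\chi_{R'}$ and weight $g+g_\infty$. The resulting basic inequality gains an extra annular term supported in $\{R'\le |x|\le 2R'\}$, dominated by $\text{const}\cdot e^{2g_\infty}\bigl(a|\nabla\chi_{R'}|^2 + a|\nabla g_\infty|\,|\nabla\chi_{R'}|\bigr)\rho_\infty$; boundedness of $\psi_E$ from Theorem \ref{SC} together with the decay of $e^{2g_\infty}\rho_\infty$ makes this vanish as $R'\to\infty$, and the subsequent $\nu\to 0$ limit reduces matters to the argument already carried out for Theorem \ref{T:3}. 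For part (ii), the Hardy barrier of Lemma \ref{L:H} is unaffected because its support lies in $\{|x|<R/2\}$; I would combine the weight from the proof of Theorem \ref{T:4} with $g_\infty$ (disjoint supports) and multiply the cut-off $\phi_l$ by $\chi_{R_l}$ with $R_l\to\infty$. The extra annular contribution is now controlled using $\psi_E\in L^2_{\rho_\infty}(\Omega)$, since $\int_{R_l\le |x|\le 2R_l}|\psi_E|^2\rho_\infty\,dx\to 0$, so it disappears in the limit and the finite-boundary part of the argument proceeds verbatim. The main obstacle is the bookkeeping required to verify \eqref{E:3.1.4} pointwise for the combined weight $g+g_\infty$; because of the disjoint-supports observation this reduces to two separate verifications already carried out in the proofs of Theorems \ref{T:2}, \ref{T:3} and \ref{T:4}.
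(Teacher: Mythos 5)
Your proposal is correct and follows exactly the route the paper intends: the paper gives no detailed proof of Theorem \ref{T:5}, stating only that it follows from the proofs of Theorems \ref{T:3} and \ref{T:4} by ``slight modifications'' in which the point at infinity is treated as an extra boundary component $\Gamma_\infty$ with conditions \eqref{ainfty}--\eqref{roinfty} playing the role of the near-boundary hypotheses. Your disjoint-support superposition of cut-offs and weights, the bounded-gradient check $a(x)|\nabla g_\infty|^2\leq \mathrm{const}$ from $\beta_\infty\leq 2$, and the two separate mechanisms for killing the annular term ($\|\psi_E\|_\infty<\infty$ plus decay of $e^{2g_\infty}\rho_\infty$ in case (i), $\psi_E\in L^2_{\rho_\infty}$ in case (ii)) supply precisely the missing details.
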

\end{remark}

\begin{remark}
In a setting similar with ours, stochastic completeness and essential self-adjointness for $H_0$ have been recently studied by W. D. Robinson and 
A. Sikora \cite{RoSi1} for the particular case when the drift potential vanishes i.e. $\rho_{\infty} \equiv 1$, under the assumption that
$D_{j,k} \in W^{1,\infty}(\Omega)$. The smoothness conditions on $D_{j,k}$ are weaker than ours but adding the necessary technicalities our results
can be extended to the case  when $D_{j,k} \in W^{1,\infty}_{loc}(\Omega)$. 
The method in \cite{RoSi1} is completely different from ours and is based on the theory of Dirichlet forms and capacity estimates \cite{Eb,FOT, MR}.
Concerning the optimality question, when applied to the same geometry of $\Omega$ our condition on  $\mathbb D$ is weaker by a logarithmic factor 
(see Example \ref{E:1} below).
\end{remark}

\begin{remark}
When applying Theorem~\ref{T:3} to concrete cases, one has to compute $\eta_j$. This is easy if $\Gamma_j$ is $C^2$-smooth 
(see Remark 5.2 above), namely $\eta_j=1-\kappa_j$, where $\kappa_j$ is the co-dimension of $\Gamma_j$. So, for example,
if $\Gamma_j$ is a closed line segment, then $\eta_j=2-d$. But $\eta_j$ is also computable (often by hand) in other cases.
For example, consider 
\begin{equation*}
\Omega=\big\{(x,y)\in\IR^2\,\big|\, x^2+y^2<1\big\}\setminus \Gamma\,,
\end{equation*}
where, for some $N\geq2$,
\begin{equation*}
\Gamma=\bigcup_{n=N}^\infty I_n\,,\qquad I_n=\big\{(x,y)\in\IR^2\,\big|\, x=\tfrac1n\text{ and }|y|\leq\tfrac12\big\}\,.
\end{equation*}
In this case, a straightforward investigation shows that $\eta=\frac12$. 

In more general situations one can, for example, use the Minkowski dimension to compute $\eta_j$. We do not consider
such general cases in this paper, but rather restrict ourselves to the $C^2$-smooth case, where Theorem~\ref{T:3}
leads to sharp results. 

A detailed study of the effect
of roughness of the boundary in a related problem (namely the Markov uniqueness for the case $\rho_{\infty} \equiv 1$)
has recently been done in \cite{LR}.
\end{remark}

We give now some particular cases of Theorems \ref{T:3}, \ref{T:4}, \ref{T:5}. 

\begin{example}\label{E:1}
Suppose $\Omega$ is bounded and simply connected with $ C^2$ boundary.  Then 
$$
\text{vol} \{x\in \Omega \,\big|\, \delta(x) \leq \nu\} \leq \text{const.}\,\nu
$$
and \eqref{E:product_j} reduces to (see \eqref{Da})
\begin{equation}
a(x)\rho_\infty(x)\leq K\delta(x)\left(\ln \frac{1}{\delta(x)}\right)^{1-\varepsilon}\,.
\end{equation}
In particular if $\rho_{\infty} \equiv 1$, $H_0$
is stochastically complete for
 \begin{equation}
a(x)\leq K\delta(x)\left(\ln \frac{1}{\delta(x)}\right)^{1-\varepsilon}\,
\end{equation}
while Corollary 4.4 in \cite{RoSi1} gives the condition
\begin{equation}
a(x)\leq K\delta(x).
\end{equation} 
\end{example}

Consider now the case when $\Omega =\mathbb R^d \setminus \{ 0\}$ and assume that \eqref{ainfty}, \eqref{roinfty}  hold  true so that there 
is no obstruction to stochastic completeness from the 
neighborhood of infinity.

\begin{example}\label{E:2}
Suppose that in a neighborhood of the origin
\begin{equation}
a(x)\rho_\infty(x)\leq K |x|^{2-d}\left(\ln \frac{1}{|x|}\right)^{1-\varepsilon}\,, \quad \varepsilon >0.
\end{equation}
Then, taking into account that  $\text{vol}\{ x \,\big|\,|x| \leq \nu \}\sim \nu^d$, $H_0$ is stochastically complete.
\end{example}
Notice that in the last example, for $d\geq 2$ both $a(x)$ and $\rho_{\infty}(x)$ can blow up as $|x|\rightarrow 0$.

In the next example  we suppose that near $\partial \Omega$, $\mathbb D$ and $\rho_{\infty}$ have power-like behavior.
\begin{example}\label{E:3}
In Example \ref{E:1} suppose that for sufficiently small $\delta(x)$:
\begin{equation}
\mathbb D(x)=D\delta(x)^{\beta}\mathds 1, \quad \rho_{\infty}(x)=\rho \delta(x)^{\gamma}\,,\quad\text{with } D,\rho\in(0,\infty)\,.
\end{equation}
Then:

i. If either $\beta \geq 2$, or $\beta< 2$ and
\begin{equation}\label{selfbetagamma}
\left(\frac{\beta+\gamma-1}{2-\beta} \right)^2\geq 1,
 \end{equation}
then $H_0$ is essentially self-adjoint.
 
 ii. If either $\beta \geq 2$, $\gamma \in \mathbb R$ or
 \begin{equation}\label{scbetagamma}
 \beta +\gamma \geq 1,
 \end{equation}
 then $H_0$ is stochastically complete.
 \end{example}
 In particular if $\gamma=0$ (i.e. the drift potential is constant) then from i. it follows that $H_0$ is essentially self-adjoint
 for 
 \begin{equation}
 \beta \geq 3/2
 \end{equation}
 and stochastically complete for
 \begin{equation}
 \beta \geq 1\,,
 \end{equation}
 which is the generalization to higher dimension of known results in $d=1$ (see e.g. \cite{RoSi2}). 
 
 If $\beta=0$ then from \eqref{selfbetagamma} it follows that
 $H_0$ is essentially self-adjoint for
 \begin{equation}
 \gamma \in (-\infty,-1] \cup [3,\infty)
 \end{equation}
 and stochastically complete for
 \begin{equation}
 \gamma \geq 1.
 \end{equation}
 Notice that in this case the sets of $\gamma$s for which $H_0$ is essential self-adjoint, respectively stochastically complete, are different, and neither of them 
 includes the other.

 In a recent paper \cite{BP} (see also \cite{BL}) 
U. Boscain and D. Prandi made a detailed study of the Laplace-Beltrami operator on conic and anticonic two dimensional surfaces.
 Among other things they proved that the Laplace-Beltrami operator 
 \begin{equation*}
 \Delta_+\text{ on }M_+=\big\{ (x, \theta)\, \big|\,\, x\in (0, \infty),\,\, \theta \in \mathbb T \big\}
 \end{equation*} 
endowed with the metric 
\begin{equation*}
ds^2= dx^2 + x^{-2\alpha} d\theta^2\,,\quad\alpha\in\mathbb R
\end{equation*}
is essentially self-adjoint if and only if $\alpha \notin (-3,1)$, and stochastically complete if and only if 
$\alpha \leq -1$. The result covers in particular one of the well-known examples where essential self-adjointness does not imply stochastic completeness: 
the Grushin metric plane.
 
Taking  $\gamma=-\alpha$, $\beta=0$ in Example \ref{E:3} above one sees that the conditions for essential self-adjointness and stochastic completeness 
are the same in spite of the fact the operators are very different: in Example \ref{E:3}, $\Omega$ is an arbitrary simple connected, bounded   domain in $
\mathbb R^d$, $d \geq 1$, with smooth boundary, while $M_+$ is two dimensional. 
On the other hand, in Example \ref{E:3} we require that $\mathbb D$ is a multiple of unity while this is not the case for $\Delta_+$.
Still, this fact can be understood at the heuristic level. Indeed, for both essential self-adjointness and stochastic completeness the relevant part of the 
operator under consideration
is the one describing the one-dimensional ``motion'' along the normal to the boundary, and in both cases this has (in appropriate coordinates)
the same form, $ x^{\alpha}\partial_x x^{-\alpha}\partial_x$, $x \in (0,x_0)$ for some $x_0 >0$.


\end{document}